\newtheorem{theorem}{Theorem}[section]
\newcounter{example}[section]
\theoremstyle{remark}
\theoremstyle{definition}
\begin{document}

\begin{frontmatter}

\title{Periodic culling outperforms isolation and vaccination strategies in controlling Influenza A (H5N6) outbreaks in the Philippines} 

\author[label1,label2]{Abel Lucido}
\author[label3]{Robert Smith?}
\author[label1]{Angelyn Lao}
\address[label1]{Mathematics \& Statistics Department, De La Salle University, 2401 Taft Avenue, 0922 Manila, Philippines}
\address[label2]{Department of Science \& Technology - Science Education Institute, Bicutan, Taguig, Philippines}
\address[label3]{Department of Mathematics, University of Ottawa, 585 King Edward Ave Ottawa, ON K1S 0S1, Canada}
\ead{angelyn.lao@dlsu.edu.ph}


\begin{abstract}
Highly Pathogenic Avian Influenza A (H5N6) is a mutated virus of Influenza A (H5N1) and a new emerging infection that recently caused an outbreak in the Philippines. The 2017 H5N6 outbreak resulted in a depopulation of 667,184 domestic birds. In this study, we incorporate half-saturated incidence in our mathematical models and investigate three intervention strategies against H5N6:  isolation with treatment, vaccination and modified culling.  We determine the direction of the bifurcation when $\mathcal{R}_0 = 1$ and show that all the  models exhibit forward bifurcation. We administer optimal control and perform numerical simulations to compare the consequences and implementation cost of utilizing different intervention strategies in the poultry population. Despite the challenges of applying each control strategy, we show that culling both infected and susceptible birds is a better control strategy in prohibiting an outbreak and avoiding further recurrence of the infection from the population compared to confinement and vaccination.
\end{abstract}

\begin{keyword}
Influenza A (H5N6) \sep half-saturated incidence \sep isolation \sep culling \sep vaccination \sep bifurcation \sep optimal control
\end{keyword}

\end{frontmatter}


\section{Introduction}
\label{sec1}

Avian influenza is a highly contagious disease of birds caused by infection with influenza A viruses that circulate in domestic and wild birds \cite{who_influenza_2018}. Some avian influenza virus subtypes are H5N1, H7N9 and H5N6, which are classified according to combinations of different virus surface proteins hemagglutinin (HA) and neuraminidase (NA). This disease is categorized as either Highly Pathogenic Avian Influenza (HPAI), which causes severe disease in poultry and results in high death rates, or Low Pathogenic Avian Influenza (LPAI), which causes mild disease in poultry \cite{who_influenza_2018}.

As reported by the World Health Organization (WHO) \cite{who_influenza_2018}, H5N1 has been detected in poultry, wild birds and other animals in over 30 countries and has caused 860 human cases in 16 of these countries and 454 deaths. H5N6 was reported emerging from China in early May 2014 \cite{joob_h5n6_2015}. H5N6 is a mutated virus of H5N1, which has been spreading in Southeast Asia since 2003 \cite{noauthor_analysis:_2016}. Bi {\em et al.}\ reported that H5N6 has replaced H5N1 as one of the dominant avian influenza virus subtypes in southern China \cite{bi_genesis_2016}. In August 2017, cases of H5N6 in the Philippines resulted in the culling of 667,184 chicken, ducks and quails \cite{noauthor_culling_2017,noauthor_president_2017}.

Due to possible threat of avian influenza virus to cause a pandemic, several mathematical models have been developed in order to test control strategies. Several included saturation incidence, where the rate of infection will eventually saturate, showing that protective measures have been put into place as the number of infected birds increases \cite{capasso_generalization_1978,liu_global_2015}.  With half-saturated incidence, it includes the half-saturation constant which pertains to the density of infected individuals that yields 50\% chance of contracting the disease\cite{shi_dynamics_2019}. Some intervention strategies employed to protect against avian influenza are biosecurity, quarantine, control in live markets, vaccination and culling. Culling is a widely used control strategy during an outbreak of avian influenza. Gulbudak {\em et al.}\ utilized a function to represent the culling rate considering both HPAI and LPAI  \cite{gulbudak2014coexistence,gulbudak_forward_2013}. The two-host model of Liu and Fang (2015) showed that screening and culling of infected poultry is a critical measure for preventing human A(H7N9) infections in the long term \cite{liu_modeling_2015}.

Emergency vaccination, prophylactic or preventive vaccination, and routine vaccination are the three vaccination strategies mentioned by the United Nations Food and Agriculture Organization (UNFAO) \cite{fao_global_2007}. In China, A(H5N1) influenza infection caused severe economic damage for the poultry industry, and vaccination served a significant role in controlling the spread of this infection since 2004 \cite{chen_avian_2009}. UNFAO and Office International des Epizooties (OIE) of the World Organization for Animal Health suggested vaccination of flocks should replace mass culling of poultry as primary control strategy during outbreak \cite{butler_vaccination_2005}. For this reason, many mathematical models focus on how vaccination could prohibit the spread of infection. 

The importance of optimal control in modelling infectious diseases has been highlighted by several recent studies. Agusto used optimal control and cost-effective analysis in a two-strain avian influenza model \cite{agusto_optimal_2013}. Jung {\em et al.}\ used optimal control in modelling H5N1 in figuring out the prevention of influenza pandemic \cite{jung_optimal_2009}. Kim {\em et al.}\ utilized an optimal-control approach in modelling tuberculosis (TB) in the Philippines \cite{kim_mathematical_2018}. Okosun and Smith?\ used optimal control to examine strategies for malaria--schistosomiasis coinfection \cite{OkosunSmith?}.

 \section{The models}
We examine three control strategies: isolation, culling and vaccination. Our mathematical models are in the form of half-saturated incidence (HSI), we take into consideration the density of infected individuals in the population that yields 50\% chance of contracting avian influenza. We present four mathematical models: a model without control, which describes the transmission dynamics of avian influenza in bird population (i.e., the avian influenza virus (AIV) model), and three models obtained from the AIV model by applying the intervention strategies isolation, vaccination and culling. Mathematical models with half-saturated incidence are more realistic compared to models with bilinear incidence  \cite{liu_global_2015, chong_mathematical_2014, lee_transmission_2018}. Description of variables and parameters used in the models are listed in the table in \ref{tab:list}.

\subsection{AIV model without intervention strategy}

\begin{figure}[ht]
\centerline{\includegraphics[width=3.5in]{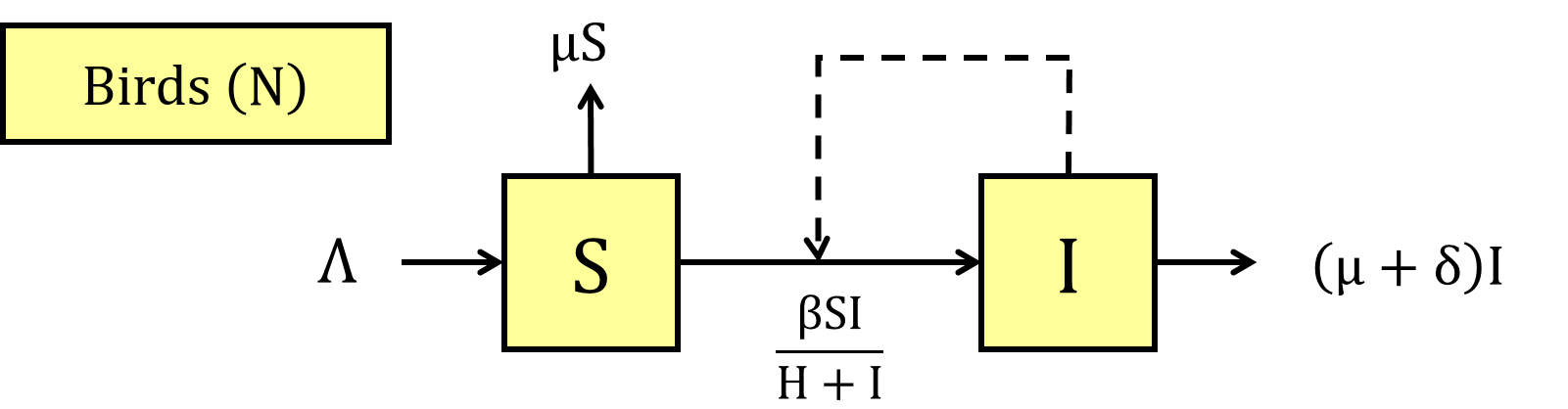}}
\vspace*{8pt}
\caption{Schematic diagram of the AIV model with half-saturated incidence.}
  \label{fig:avian model}
\end{figure}

In the AIV model without intervention strategy (shown in Fig.~\ref{fig:avian model}), the bird population is divided into sub-populations (represented by compartments): the susceptible birds ($S$) and the infected birds ($I$). The total population of birds are represented by $N(t)$ at time $t$,  where $N(t)=S(t) + I(t)$. The number of susceptible birds increases through birth rate ($\Lambda$) and reduces through the natural death rate of birds ($\mu$). Infected birds additionally decrease through the disease-specific death rate caused by the virus ($\delta$). 

The number of susceptible birds who become infected through direct contact is represented by $\frac{\beta S I}{H + I}$, which denotes the transfer of the susceptible bird population to the infected bird population. Note that $\beta$ is the rate at which birds contract avian influenza and $H$ is the half-saturation constant, indicating the density of infected individuals in the population that yields 50\% possibility of contracting avian influenza \cite{chong_mathematical_2014}. The saturation effect of the infected bird population indicates that a very large number of infected may tend to reduce the number of contacts per unit of time  due to awareness of farmers to the disease \cite{capasso_generalization_1978}. In Figure \ref{fig:avian model}, the dashed directional arrow from $I$ to the arrow from $S$ to $I$ indicates that $\frac{\beta S I}{H + I}$ is regulated by $I$.

Based on AIV model described above, we have the following system of nonlinear ordinary differential equations (ODEs):
\begin{equation}
\label{eq:avian model}
\begin{split}
\dot{S} &= \Lambda - \mu S - \dfrac{\beta S I}{H + I},\\
\dot{I} &=\dfrac{\beta S I}{H + I} - (\mu + \delta)I.
\end{split}
\end{equation}

\subsection{Confinement strategy for infected poultry (isolation model)}

\begin{figure}[ht]
\centerline{\includegraphics[width=3.75in]{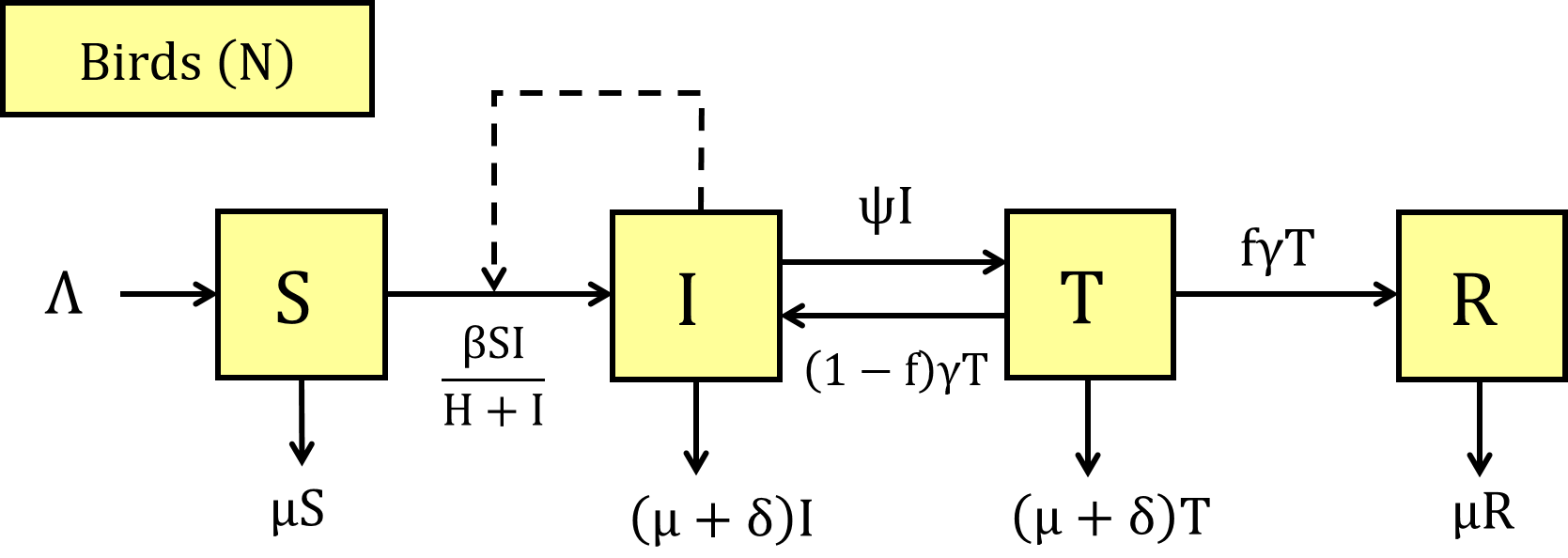}}
\vspace*{8pt}
\caption{Schematic diagram of confinement or isolation model with HSI}
  \label{fig:isolation model}
\end{figure}

Here, we employ the strategy of confining the infected poultry population (which will be referred as the isolation strategy) into the AIV model. Several studies concluded that reducing the contact rate is an effective measure in preventing the spread of infection into the population \cite{teng_contact_2018,lee_transmission_2018}. For the isolation model (shown in Fig.~\ref{fig:isolation model}), we have included the compartment representing the population of isolated birds that undergoes treatment ($T$) and the compartment representing the population of recovered birds ($R$). We denote the isolation rate of identified infected birds by $\psi$ and the release of birds from isolation by $\gamma$.

We apply treatment to birds during isolation so that some birds can be released from isolation even though they were still infected. The proportion of infected birds that have been put into isolation and recovered is represented by $f$; infected birds that have not recovered and remained infected are represented by $(1-f)$. We did not consider natural recovery of poultry in our model due to high mortality rate of HPAI virus infection.

The system of ODEs for the isolation model is 
\begin{equation}
\label{eq:iso model}
\begin{split}
        \dot{S} &= \Lambda - \mu S - \dfrac{\beta S I}{H + I} , \\
        \dot{I} &=\dfrac{\beta S I}{H + I} + (1-f)\gamma T - (\mu + \delta + \psi)I , \\
        \dot{T} &= \psi I - (\mu + \delta + \gamma)T, \\ 
        \dot{R} &=f\gamma T - \mu R.
\end{split}
\end{equation}

\subsection{Immunization of the poultry population (vaccination model)}
\begin{figure}[ht]
\centerline{\includegraphics[width=3.25in]{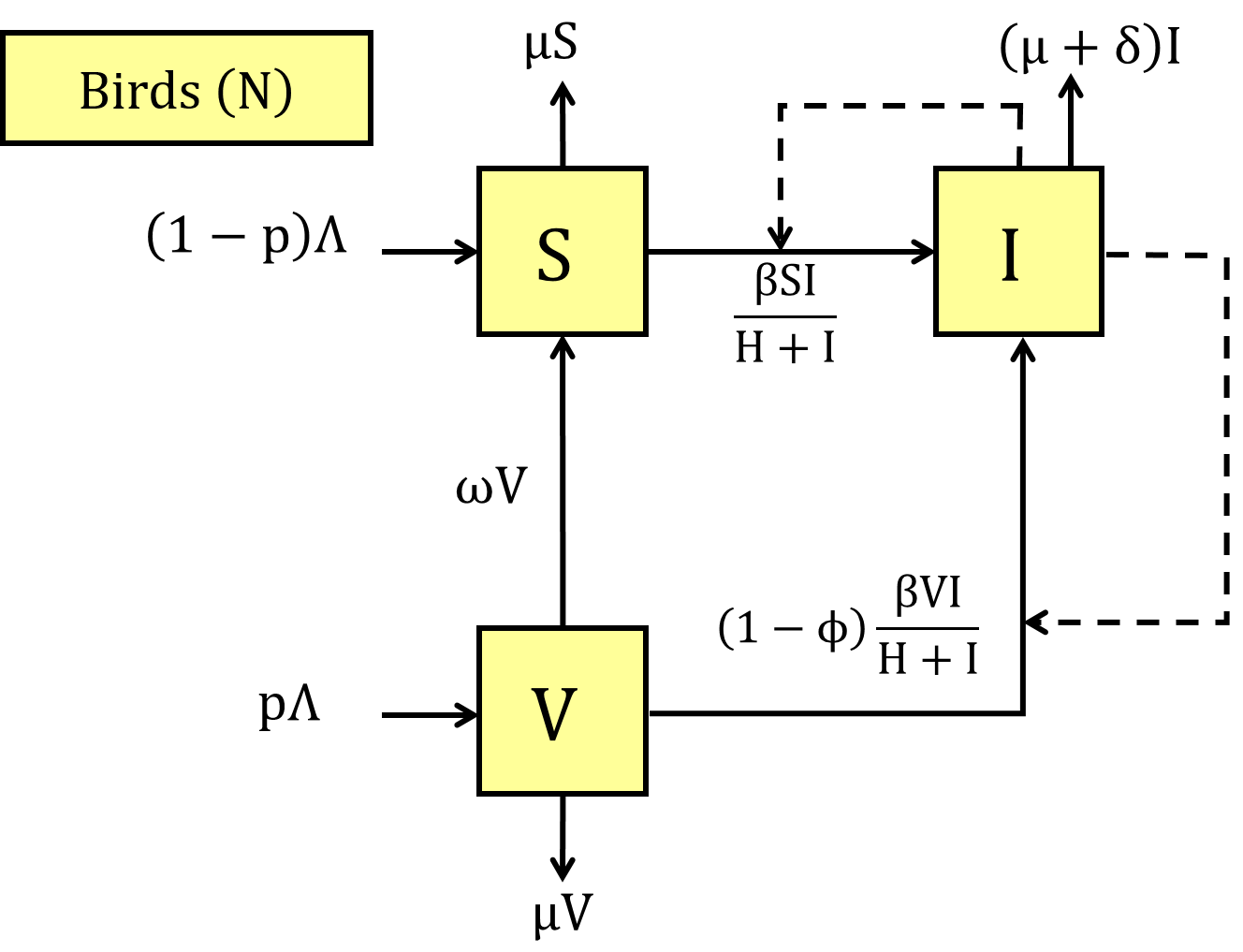}}
\vspace*{8pt}
\caption{Schematic diagram of preventive vaccination model with HSI}
  \label{fig:vaccination model}
\end{figure}

Infection-reduction measures can also help in controlling the disease during an outbreak~\cite{gumel_global_2009}. Aside from isolating infected birds, we also recognize vaccination of susceptible birds as a control strategy to reduce the number of infected birds.  Joob and Viroj \cite{joob_h5n6_2015} reported the existence and effectiveness of a vaccine for birds with H5N6. 

According to UNFAO, prophylactic vaccination (or preventive vaccination) is carried out if a high risk of virus incursion is identified and early detection or rapid response measures may not be sufficient \cite{fao_global_2007}. In this view, we modified the vaccination model (presented in \cite{lee_transmission_2018}) by dividing the birth rate ($\Lambda$) depending on the prevalence rate of vaccination ($p$) as shown in Fig.~\ref{fig:vaccination model} \cite{lee_transmission_2018}. The poultry population prone to H5N6 is divided into two compartments: the vaccinated birds represented by $V$ and the susceptible or unvaccinated birds denoted by $S$. In our vaccination model, we differentiate the immunized group (vaccinated) from non-immunized  group (unvaccinated). 

We investigate the effectiveness of the vaccine not only through its reported efficacy (denoted by $\phi$) but also based on the waning rate of the vaccine (denoted by $\omega$). To represent the acquired immunity of the vaccinated group, the infectivity of vaccinated birds is reduced by a factor $1-\phi$. The system of ODEs representing the vaccination model is
\begin{equation}
\label{eq:vac model}
\begin{split}
   \dot{S} &= (1-p) \Lambda + \omega V - \mu S - \dfrac{\beta S I}{H + I}, \\ 
\dot{V} &= p \Lambda - (\mu + \omega) V - (1-\phi) \dfrac{\beta V I}{H + I}, \\ 
\dot{I} &=\dfrac{\beta S I}{H + I} + (1-\phi) \dfrac{\beta V I}{H + I} - (\mu + \delta)I.
\end{split}
\end{equation}

\subsection{Depopulation of susceptible and infected birds (culling model)}

\begin{figure}[ht]
\centerline{\includegraphics[width=3.25in]{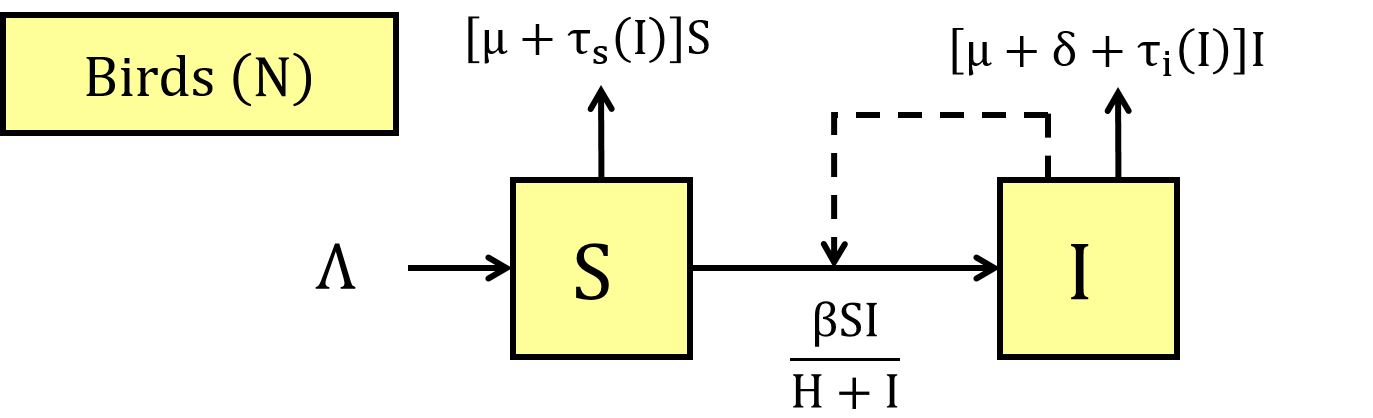}}
\vspace*{8pt}
\caption{Schematic diagram of depopulation or culling model with HSI}
  \label{fig:culling model}
\end{figure}

During outbreaks of avian influenza, one of the most widely used strategies is depopulation or culling \cite{fao_global_2007}. A total number of $667,184$ chicken, ducks, and quails were culled in August 2017 to resolve the outbreak of H5N6 in the Philippines \cite{noauthor_culling_2017,noauthor_president_2017}. Several studies employed culling as a control strategy against avian influenza \cite{chong_avian-only_2016, chong_modeling_2015, liu_modeling_2015, gulbudak_forward_2013} and pointed out the significance of obtaining an appropriate threshold policy to combat avian influenza and prevent the overkilling of birds.

The culling models of Gulbudak {\em et al.}\ considered bilinear incidence in transmission of infection \cite{gulbudak_forward_2013,gulbudak2014coexistence}. Gulbudak and Martcheva designated a culling rate for each strategy by a different function \cite{gulbudak_forward_2013}. Gulbudak {\em et al.}\ used half-saturated incidence to represent the culling rate for the infected population \cite{gulbudak2014coexistence}. In our case, we improved their culling model by incorporating the dynamics of half-saturated incidence on the transmission of infection and on the culling rate for infected birds and for susceptible birds that are at high risk of infection.

Moreover, we define the culling function of the infected and susceptible birds as $\tau_i (I) = \frac{c_i I}{H + I}$ and $\tau_s (I) = \frac{c_s I}{H + I}$, respectively. The culling rate is represented by $c_s$ for susceptible birds and $c_i$ for infected birds. The following system of ODEs represents the culling model:
\begin{equation}
\label{eq:cul model}
\begin{split}
        \dot{S} &= \Lambda - \mu S - \tau_s (I)S - \dfrac{\beta S I}{H + I}, \\ 
        \dot{I} &=\dfrac{\beta S I}{H + I} - (\mu + \delta)I - \tau_i (I) I.
\end{split}
\end{equation}

\section{Stability and bifurcation analysis}

We first analyze the AIV model without intervention. The disease-free equilibrium (DFE) of the AIV model (\ref{eq:avian model}) is
 \begin{equation*}
 \label{eq:avianDFE}
    E_A^0 = \left(S^0, I^0 \right) = \left(\dfrac{\Lambda }{\mu}, 0\right).
 \end{equation*}

The basic reproduction number for the AIV model is
\begin{align}
\label{eq:avian RA}
   \mathcal{R}_A= \dfrac{\beta \Lambda}{H \mu (\mu + \delta)}.
\end{align}
The disease-free equilibrium $E_A^0$ of the AIV model is locally asymptotically stable if $\mathcal{R}_A < 1$ and unstable if $\mathcal{R}_A > 1$.

The endemic equilibrium for the AIV model is represented by
\begin{align}
\label{eq:avian endemic}
E_A^* = \left(S^*, I^* \right) = \left(\dfrac{\Lambda + H (\mu + \delta)}{\mu + \beta}, \dfrac{\beta \Lambda - \mu H (\mu + \delta)}{(\mu + \delta) (\mu + \beta)} \right).    
\end{align} From AIV model, we obtain two possible endemic equilibria, that is $E_A^*$ and $$E_{A_1}^*~=~(S_1^*, I_1^*)~=~ \left( \dfrac{(\mu + \delta)(H+I_1^*)}{\mu + \beta}, \dfrac{\Lambda-\mu S_1^*}{\beta -(\Lambda-\mu S^*_1)} \right).$$ Simplifying $S_1^*$ and $I_1^*$ will result to $E_{A_1}^* = E_A^*$, and we have an endemic equilibrium. We can rewrite $I^*$ as
\begin{equation*}
    I^*  =  \dfrac{\mu H}{\mu + \beta} (\mathcal{R}_A - 1).
\end{equation*}
Hence when $\mathcal{R}_A \leq 1$ then $I^* \leq 0$, so there is no biologically feasible endemic equilibrium. For $\mathcal{R}_A > 1$, we have $I^* > 0$, so we have an endemic equilibrium. We conclude that the AIV model has no endemic equilibrium when $\mathcal{R}_A \leq 1$, and has an endemic equilibrium when $\mathcal{R}_A > 1$. It follows that reducing the basic reproduction number $(\mathcal{R}_A)$ below one is sufficient to eliminate avian influenza from the poultry population.

\begin{figure}[ht]
 \centering
  \includegraphics[width=3in]{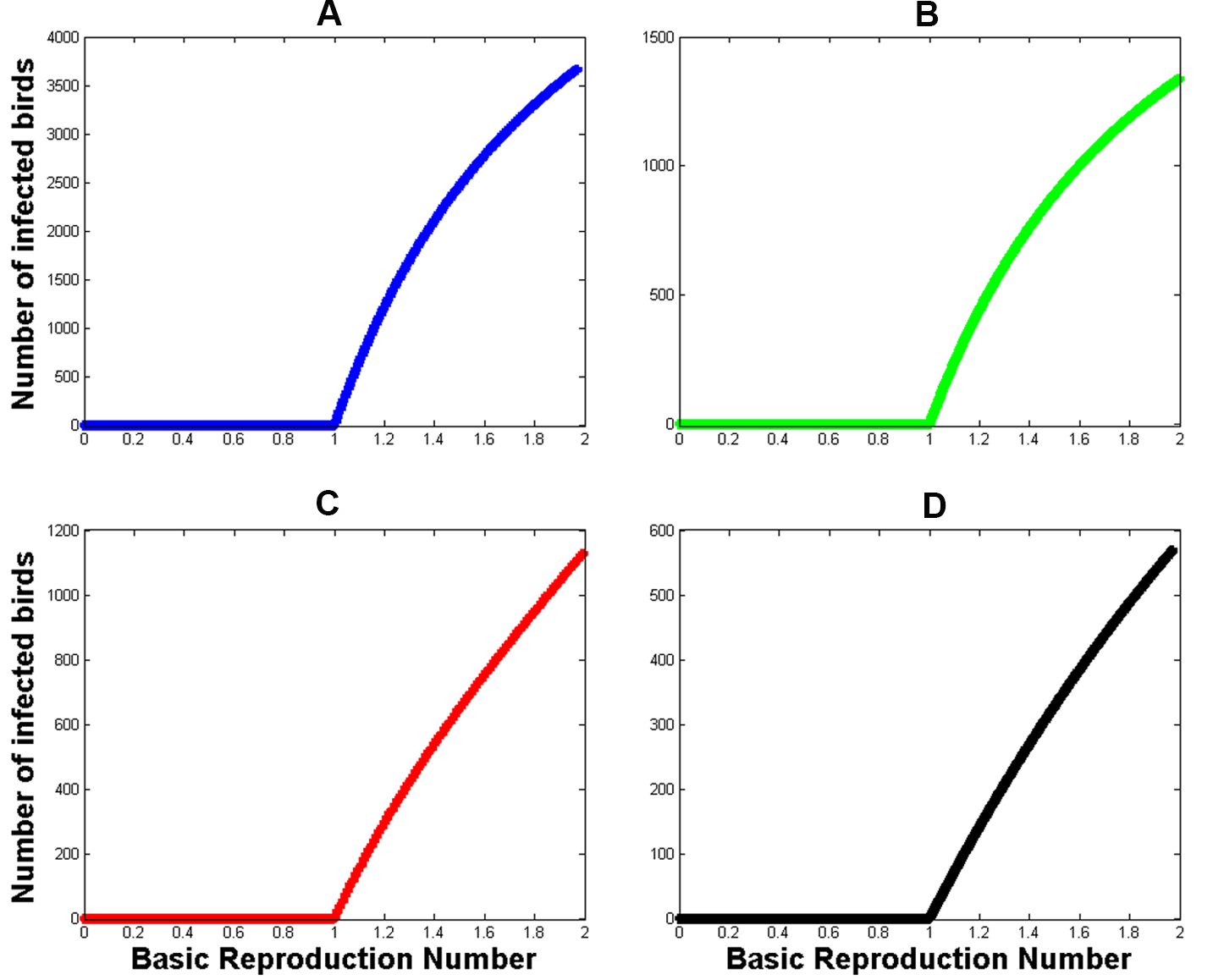}
  \caption{Bifurcation diagram for AIV (A), isolation (B), vaccination (C) and culling (D) model with respect to their basic reproduction number, indicating only forward bifurcations.}
  \label{fig:bifurcation}
\end{figure}

As exhibited in Fig.~\ref{fig:bifurcation}{A}, we have a bifurcation plot between the infected population and the basic reproduction number $\mathcal{R}_A$. Clearly, we have a forward bifurcation for the AIV model, showing that when the basic reproduction number crosses unity, an endemic equilibrium appears. 

We continue by investigating different strategies that can reduce or stop the spreading of AIV. From the isolation model (\ref{eq:iso model}), the DFE is given by 
\begin{equation*}
\label{eq:iso DFE}
    E_T^0 = \left(S^0, I^0, T^0, R^0 \right) = \left(\dfrac{\Lambda }{\mu}, 0, 0 , 0 \right).
\end{equation*}

The corresponding basic reproduction number ($\mathcal{R}_T$) is represented by
\begin{align}
\label{eq:iso RT}
   \mathcal{R}_T  =  \dfrac{\beta \Lambda (\mu + \delta + \gamma)}{H \mu [(\mu + \delta + \psi) (\mu + \delta + \gamma) - (1-f) \gamma \psi]}.
\end{align}

The disease-free equilibrium ($E_T^0$) of the isolation model is locally asymptotically stable if $\mathcal{R}_T < 1$ and unstable if $\mathcal{R}_T > 1$.  Consequently, we can identify some conditions on how confinement of infected birds affects the stability of $E_T^0$. The DFE ($E_T^0$) is locally asymptotically stable whenever 
$$\psi > \dfrac{\beta \Lambda (\mu + \delta + \gamma) - H \mu (\mu + \delta)(\mu + \delta + \gamma)}{H \mu (\mu + \delta + f \gamma)}. $$

For the endemic equilibrium of the isolation model (\ref{eq:iso model}), we indicate the presence of infection in the population by letting $I \neq 0$ and solve for $S, I, T$, and $R$. Thus, we have
\begin{equation}
\label{eq:iso endemic}
\begin{split}
E_T^*  & =  \left(S^{**}, I^{**}, T^{**}, R^{**} \right) \\ 
& =  \left(\dfrac{\Lambda (H + I^{**})}{\mu (H + I^{**}) + \beta I^{**}},  \dfrac{[\beta \Lambda - \mu H (\mu + \delta +\psi)](\mu + \delta + \gamma) + (1-f) \gamma \psi \mu H}{(\mu + \beta)[(\mu + \delta + \psi)(\mu + \delta + \gamma)-(1-f) \gamma \psi ]}, \right. \\ 
& \quad \left.\dfrac{\psi I^{**}}{\mu + \delta + \gamma},  \dfrac{f \gamma \psi I^{**}}{\mu (\mu + \delta + \gamma)} \right).
\end{split}
\end{equation}

Given the basic reproduction number (\ref{eq:iso RT}), we rewrite the expression $I^{**}$ of the isolation model as 

\begin{equation}
\label{eq:iso Ib solution}
\begin{aligned}
 I^{**} & = \dfrac{\mu H (\mathcal{R}_T - 1)}{\mu + \beta}.
\end{aligned}
\end{equation}

From (\ref{eq:iso Ib solution}), it follows that when $\mathcal{R}_T \leq 1$, we have $I_b \leq 0$ and there is no endemic equilibrium; but when $\mathcal{R}_T > 1$, we get $I_b>0$ and we have an endemic equilibrium. Thus, the isolation model (\ref{eq:iso model}) has no endemic equilibrium when $\mathcal{R}_T \leq 1$ and has an endemic equilibrium when $\mathcal{R}_T > 1$. Hence there is no backward bifurcation for the isolation model when $\mathcal{R}_T<1$.

In Fig.~\ref{fig:bifurcation}{B}, we have a forward bifurcation for the isolation model, which supports our claim. The bifurcation plot between the infected population $I^{**}$ and the basic reproduction number $\mathcal{R}_T$ for the isolation model shows that reducing $\mathcal{R}_T$ below unity is enough to eliminate avian influenza from the poultry population.

Next we analyze the stability of the associated equilibria of the AIV model with vaccination strategy (\ref{eq:vac model}). The DFE and the basic reproduction number are 
\begin{equation*}
\label{eq:vac DFE}
    E_V^0 = (S^0, V^0, I^0) = \left(\frac{(\mu + \omega - p \mu) \Lambda}{\mu (\mu + \omega)} , \frac{p \Lambda}{\mu + \omega}, 0 \right)
\end{equation*}
and
\begin{equation*}
\label{eq:vacRV}
    \mathcal{R}_V =  \frac{\Lambda \beta(\mu + \omega - p \mu \phi)}{\mu H(\mu + \omega) (\mu + \delta)}.
\end{equation*}

The disease-free equilibrium $E_V^0$ of vaccination model is locally asymptotically stable if $\mathcal{R}_V < 1$ and unstable if $\mathcal{R}_V > 1$. Moreover, we obtain some conditions for the prevalence rate of vaccination ($p$) and vaccine efficacy ($\phi$), which both range from $0$ to $1$. The DFE $E_V^0$ of vaccination model is locally asymptotically stable whenever $$\dfrac{(\mu + \omega)}{\mu} \left( 1 - \dfrac{\mu H (\mu + \delta)}{\Lambda \beta} \right) \leq p \phi < 1$$

For the endemic equilibrium of the vaccination model (\ref{eq:vac model}), we obtain the following:
\begin{equation*}
\label{eq:vac endemic}
\begin{split}
E_V^{*}  & =  \left(S^{***}, V^{***}, I^{***} \right) \\
 & = \left(\dfrac{(H + I^{***}) [(1-p) \Lambda [(\mu + \omega)(H + I^{***}) + (1-\phi)\beta I^{***}] + \omega p \Lambda (H + I^{***}) ]}{[\mu (H + I) + \beta I] [(\mu + \omega) (H + I^{***}) + (1-\phi) \beta I^{***}]}, \right. \\
 & \quad \left. \dfrac{p \Lambda (H + I^{***})}{(\mu + \omega)(H + I^{***}) + (1-\phi) \beta I^{***}}, \quad \dfrac{-b \pm \sqrt{b^2 -4ac}}{2a} \right)
\end{split}
\end{equation*}

such that 

\begin{equation*}
\begin{aligned}
     a & = - (\mu + \delta) [\mu \beta(1 - \phi) + (\mu + \omega) (\mu + \beta)+ \beta^2 (1-\phi)], \\
    b &  = \beta^2 \Lambda (1-\phi) + \mu H (\mu + \delta) (\mu + \omega) (\mathcal{R}_V -1) \\
    & \qquad - (\mu + \delta) H [\mu \beta (1-\phi) + (\mu + \beta) (\mu + \omega)], \\
    c &  = \mu H^2 (\mu + \delta) (\mu + \omega) (\mathcal{R}_V -1). 
\end{aligned}
\end{equation*}

The vaccination model (\ref{eq:vac model}) has no endemic equilibrium when $\mathcal{R}_V \leq 1$, and has a unique endemic equilibrium when $\mathcal{R}_V > 1$. Fig.~\ref{fig:bifurcation}{C} illustrates a bifurcation plot between the population of infected birds and the basic reproduction number $\mathcal{R}_V$, showing a forward bifurcation. This bifurcation diagram is in line with our result in Theorem \ref{thm:vac endemic eq}, so there is no endemic equilibrium when $\mathcal{R}_V < 1$, but there is a unique endemic equilibrium when $\mathcal{R}_V > 1$. In this case, reducing $\mathcal{R}_V$ below one is sufficient to control the disease.

Finally, we analyze the stability of equilibria of the AIV model with culling (\ref{eq:cul model}). The DFE for the culling model is given by 
\begin{equation*}
 \label{eq:culDFE}
    E_C^0 = \left(S^0, I^0 \right) = \left(\dfrac{\Lambda }{\mu}, 0\right).
 \end{equation*} 
and the basic reproduction number is  
 \begin{align*}
\label{eq:cul RC}
   \mathcal{R}_C= \dfrac{\beta \Lambda}{H \mu [\mu + \delta]}.
\end{align*}

The endemic equilibria of the culling model is determined as 

 \begin{equation}
 \label{eq:cul endemic}
\begin{split}
E_C^{*}  & =  \left(S^{****}, I^{****} \right) = \left(\dfrac{\Lambda (H +I^{****} )}{\mu H + (\mu + c_s  + \beta) I^{****} }, \dfrac{-b \pm \sqrt{b^2 -4ac}}{2a} \right),
\end{split}
\end{equation}

such that

\begin{equation*}
\begin{aligned}
    & a = - (\mu + \delta + c_i) (\mu + c_s + \beta), \\
    & b = \mu H (\mu + \delta)(\mathcal{R}_C - 1) - c_i \mu H - H (\mu + \delta) (\mu + c_s + \beta), \\
    & c = \mu H^2 (\mu + \delta)(\mathcal{R}_C - 1). 
\end{aligned}
\end{equation*}

For the culling model (\ref{eq:cul model}), we have shown that a backward bifurcation does not exist when $\mathcal{R}_C<1$. The culling model (\ref{eq:cul model}) has no endemic equilibrium when $\mathcal{R}_C < 1$, and has a unique endemic equilibrium when $\mathcal{R}_C > 1 $.

In Fig.~\ref{fig:bifurcation}{D}, we have a bifurcation diagram showing the infected population and the basic reproduction number ($\mathcal{R}_C$). We have a forward bifurcation in the plot, which is similar to the result stated in Theorem \ref{thm:cul endemic eq}, implying that, when $\mathcal{R}_C<1$,  avian influenza dies out from the poultry population.


\section{Optimal control strategies}

We now integrate an optimal-control approach in all our models: isolation, vaccination, and culling.
\subsection{Isolation}
Our first control involves isolating infected birds with $u_1$ replacing $\psi$. The second control indicates the effort of the farmers in choosing a drug that can increase the success of treatment with $u_2$ replacing $f$. The isolation model (\ref{eq:iso model}) becomes
\begin{equation}
\label{eq:opt iso model}
\begin{split}
        \dot{S} &= \Lambda - \mu S - \dfrac{\beta S I}{H + I} , \\
        \dot{I} &=\dfrac{\beta S I}{H + I} + \left(1- {{u_2(t)}} \right)\gamma T - \left(\mu + \delta + {{u_1(t)}} \right)I , \\
        \dot{T} &= {{u_1(t)}} I - (\mu + \delta + \gamma)T, \\
        \dot{R} &= {{u_2 (t)}} \gamma T - \mu R.
\end{split}
\end{equation}

We represent the rate of isolation of infected birds by control $u_1(t)$, that is the rate $u_1(t) I$ transfers from $I$ to $T$. The proportion of successfully treated birds released from isolation is denoted by $u_2(t)$.

The problem is to minimize the objective functional defined by

\begin{equation*}
\label{eq:iso_cost}
    J_I(u_1, u_2) = \int_0^{t_f} \left[ I(t) + T(t) + \frac{B_1}{2} u_1^2(t) + \frac{B_2}{2} u_2^2 (t) \right]dt,
\end{equation*}
 which is subject to the ordinary differential equations in (\ref{eq:opt iso model}) and where $t_f$ is the final time. The objective functional includes isolation control ($u_1(t)$) and treatment control ($u_2(t)$),  while $B_1$ and  $B_2$ are weight constants associated to relative costs of applying respective control strategy. Given that we have two controls $u_1(t)$ and $u_2(t)$, we want to find the optimal controls $u_1^*(t)$  and $u_2^*(t)$ such that 
 
 \begin{equation*}
 \label{eq:iso_cost_min}
     J_I(u_1^*, u_2^*) = \min_{\mathcal{U}_I} \{J_I(u_1, u_2)\},
 \end{equation*} 
 where $\mathcal{U}_I = \{(u_1, u_2)| \, u_i: [0, t_f] \rightarrow [a_i, b_i],i = 1,2,\text{ is Lebesgue integrable} \}$ is the control set. We consider the worst and best scenarios of isolating infected birds and giving treatment by letting the lower bounds $a_i = 0$ and upper bounds $b_i = 1$, for $i=1,2$.

\subsubsection{Characterization of optimal control for isolation strategy}

We generate the necessary conditions of this optimal control using Pontryagin's Maximum Principle \cite{pontryagin_mathematical_1986}. We define the Hamiltonian, denoted by $H_I$, as follows:
\begin{equation}
\label{eq:iso_Hamiltonian}
    \begin{split}
        H_I & = I(t) + T(t) + \frac{B_1}{2} u_1^2 (t) + \frac{B_2}{2} u_2^2 (t) + \lambda_{{I}_1} \left(\Lambda - \mu S - \frac{\beta S I}{H + I} \right) \\
         & \quad + \lambda_{{I}_2} \left( \frac{\beta S I}{H + I} + \left[1-u_2(t) \right] \gamma T - \left[\mu + \delta + u_1(t) \right]I\right) \\
         & \quad + \lambda_{{I}_3}  \left( u_1 (t) I - (\mu + \delta + \gamma) T \right) + \lambda_{{I}_4} \left( u_2(t) \gamma T - \mu R \right),
    \end{split}
\end{equation} 
where $\lambda_{{I}_1}, \lambda_{{I}_2}, \lambda_{{I}_3}, \lambda_{{I}_4}$ are the associated adjoints  for the states $S, I, T, R$. We obtain the system of adjoint equations by using the partial derivatives of the Hamiltonian (\ref{eq:iso_Hamiltonian}) with respect to each state variable.

\begin{theorem}
There exists optimal controls $u_1^*(t)$ and $u_2^*(t)$ and solutions $S^*, I^*, T^*, R^*$ of the corresponding state system (\ref{eq:opt iso model}) that minimizes the objective functional $J_I (u_1(t), u_2(t))$ over $\mathcal{U}_I$. Then there exists adjoint variables $\lambda_{I_1}, \lambda_{{I}_2},\lambda_{{I}_3},$ and $\lambda_{{I}_4}$ satisfying

\begin{equation*}
\label{eq:iso_adjoint}
\begin{split}
   \dfrac{d\lambda_{{I}_1}}{dt} &= \lambda_{{I}_1} \left(\mu + \dfrac{\beta I}{H+I}\right) - \lambda_{{I}_2} \left( \dfrac{\beta I}{H+I}\right), \\
    \dfrac{d\lambda_{{I}_2}}{dt}&=  -1 + \lambda_{{I}_1}\left( \dfrac{H \beta S}{(H+I)^2} \right) - \lambda_{{I}_2}\left( \dfrac{H \beta S}{(H+I)^2} \right) + \lambda_{{I}_2} \left[ \mu + \delta + u_1(t) \right] - \lambda_{{I}_3} u_1 (t)\\
   \dfrac{d\lambda_{{I}_3}}{dt} &= -1 - \lambda_{{I}_2} \left[ 1- u_2(t) \right] \gamma + \lambda_{{I}_3}(\mu + \delta + \gamma) - \lambda_{{I}_4} u_2(t) \gamma, \\
   \dfrac{d\lambda_{{I}_4}}{dt} &= \lambda_{{I}_4} \mu
\end{split}
\end{equation*}
with transversality conditions $\lambda_{{I}_i}(t_f) = 0$, for $i = 1,2,3,4$. 
Furthermore,
\begin{equation}
\label{eq:iso_u1u2}
\begin{array}{ll} %
     &  u_1^* = \min \left\{ b_1, \max \left\{ a_1,  \dfrac{ \left(\lambda_{{I}_2}-\lambda_{{I}_3} \right)I}{B_1} \right\} \right\} \quad and \quad u_2^* = \min \left\{ b_2, \max \left\{ a_2,  \dfrac{ \left(\lambda_{{I}_2}-\lambda_{{I}_4} \right) \gamma T}{B_2} \right\} \right\}.
\end{array}
\end{equation}
\end{theorem}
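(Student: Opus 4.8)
The plan is to prove the statement in three stages: first secure existence of an optimal control pair by checking the hypotheses of a standard existence theorem, then derive the adjoint system together with the transversality conditions from Pontryagin's Maximum Principle, and finally obtain the characterization in (\ref{eq:iso_u1u2}) from the optimality condition on the Hamiltonian.

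First I would establish a priori bounds on the state variables. Writing $N = S + I + T + R$, the total population satisfies $\dot{N} \leq \Lambda - \mu N$, so every state stays nonnegative and uniformly bounded on $[0,t_f]$ by a constant depending only on the initial data and $\Lambda/\mu$. This guarantees that the state system (\ref{eq:opt iso model}) has solutions on the whole interval for each admissible control and that its right-hand side is Lipschitz in the states on the relevant bounded region. With these bounds in hand, I would invoke a standard existence theorem for optimal control (of Fleming--Rishel type) by verifying: (i) the admissible control-state set is nonempty, which follows from the solvability just noted; (ii) the control set $\mathcal{U}_I = [a_1,b_1]\times[a_2,b_2]$ is convex and closed; (iii) the dynamics are bounded by a linear function of state and control, which holds because the a priori bounds make the incidence term $\beta S I/(H+I)$ bounded; (iv) the integrand $L = I + T + \tfrac{B_1}{2}u_1^2 + \tfrac{B_2}{2}u_2^2$ is convex in $(u_1,u_2)$, being quadratic with positive coefficients; and (v) the coercivity estimate $L \geq c_1(u_1^2 + u_2^2) - c_2$ with $c_1 = \tfrac{1}{2}\min\{B_1,B_2\} > 0$ and $c_2 = 0$, using $I,T \geq 0$. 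Together these yield an optimal pair $(u_1^*,u_2^*)$ with states $S^*, I^*, T^*, R^*$.

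Once existence is secured, I would apply Pontryagin's Maximum Principle \cite{pontryagin_mathematical_1986}. The adjoint equations arise from $d\lambda_{I_j}/dt = -\partial H_I/\partial x_j$ evaluated along the optimal trajectory, with $x_j$ ranging over $S, I, T, R$; differentiating the Hamiltonian (\ref{eq:iso_Hamiltonian}) reproduces exactly the stated adjoint system, and the transversality conditions $\lambda_{I_j}(t_f) = 0$ follow because the objective functional carries no terminal payoff. The characterization then comes from the interior optimality conditions: solving $\partial H_I/\partial u_1 = B_1 u_1 - (\lambda_{I_2}-\lambda_{I_3})I = 0$ and $\partial H_I/\partial u_2 = B_2 u_2 - (\lambda_{I_2}-\lambda_{I_4})\gamma T = 0$ gives the interior expressions, and projecting these onto the admissible interval $[a_j,b_j]$ yields the clipped $\min/\max$ form in (\ref{eq:iso_u1u2}).

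The main obstacle I anticipate is the careful verification of the existence hypotheses, specifically the uniform state bounds and the linear-growth bound on the dynamics, since the half-saturated incidence term, although bounded, makes this step less mechanical than in a bilinear-incidence model; by contrast, the adjoint derivation and the control characterization are routine differentiation once existence and the required convexity and coercivity are in place.
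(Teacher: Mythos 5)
Your proposal is correct and follows essentially the same route as the paper's own proof: existence via the Fleming--Rishel theorem, Pontryagin's Maximum Principle for the adjoint system and transversality conditions, and the interior optimality conditions $\partial H_I/\partial u_1 = 0$, $\partial H_I/\partial u_2 = 0$ projected onto the control bounds. The only difference is one of rigor, not of route—you verify the Fleming--Rishel hypotheses explicitly (a priori bounds via $\dot{N} \leq \Lambda - \mu N$, convexity, coercivity), where the paper merely asserts them.
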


\begin{proof}
The existence of optimal control $(u_1^*, u_2^*)$ is given by the result of Fleming and Rishel (1975). Boundedness of the solution of our system (\ref{eq:iso model}) shows the existence of a solution for the system. We have nonnegative values for the controls and state variables.  In our minimizing problem, we have a convex integrand for $J_I$ with respect to $(u_1, u_2)$. By definition, the control set is closed, convex, and compact which shows the existence of optimality solutions in our optimal system. By Pontryagin's Maximum Principle \cite{pontryagin_mathematical_1986}, we obtain the adjoint equations and transversality conditions. We differentiate the Hamiltonian (\ref{eq:iso_Hamiltonian}) with respect to the corresponding state variables as follows:

\begin{equation*}
       \dfrac{d\lambda_{I_1}}{dt} = - \dfrac{\partial H_I}{\partial S} , \quad
       \dfrac{d\lambda_{I_2}}{dt} = -\dfrac{\partial H_I}{\partial I}, \quad
       \dfrac{d\lambda_{I_3}}{dt} = - \dfrac{\partial H_I}{\partial T}, \quad
       \dfrac{d\lambda_{I_4}}{dt} = - \dfrac{\partial H_I}{\partial R} 
\end{equation*} 
with $\lambda_{I_i}(t_f)=0$ where $i = 1, 2, 3, 4$. We consider the optimality condition 

\begin{equation*}
\begin{array}{ll}
     & \dfrac{\partial H_I}{\partial u_1} = B_1 u_1(t) - \lambda_{I_2} I + \lambda_{I_3} I = 0 \quad \text{and} \quad \dfrac{\partial H_I}{\partial u_2} = B_2 u_2(t) - \lambda_{I_2} \gamma T + \lambda_{I_4} \gamma T = 0
\end{array}
\end{equation*}

to derive the optimal controls in (\ref{eq:iso_u1u2}). We consider the bounds of the controls and obtain the characterization for optimal controls $u_1^*$ and $u_2^*$ as follows:

\begin{equation*}
\begin{array}{ll} %
     &  u_1^* = \min \left\{ 1, \max \left\{ 0,  \dfrac{ \left(\lambda_{\mathcal{I}_2}-\lambda_{\mathcal{I}_3} \right)I}{B_1} \right\} \right\} \quad and \quad u_2^* = \min \left\{ 1, \max \left\{ 0,  \dfrac{ \left(\lambda_{\mathcal{I}_2}-\lambda_{\mathcal{I}_4} \right) \gamma T}{B_2} \right\} \right\}.
\end{array}
\end{equation*}
\end{proof}

\subsection{Vaccination}

For vaccination, the first control represents the effort of the farmers to increase vaccinated birds, while the other control describes the efficacy of the vaccine in providing immunity against H5N6. Where $u_3(t)$ and $u_4(t)$ replace $p$ and $\phi$, respectively, into the vaccination model (\ref{eq:vac model}) to obtain
\begin{equation}
\label{eq:opt vac model}
\begin{split}
   \dot{S} &= \left(1- {{u_3(t)}} \right) \Lambda + \omega V - \mu S - \dfrac{\beta S I}{H + I}, \\
\dot{V} &= {{u_3(t)}} \Lambda - (\mu + \omega) V - [1-u_4(t)] \dfrac{\beta VI}{H + I}, \\
\dot{I} &=\dfrac{\beta S I}{H + I} + [1-u_4(t)] \dfrac{\beta V I}{H + I} - (\mu + \delta)I.
\end{split}
\end{equation}

We describe the proportion of birds that are vaccinated by the control $u_3(t)$ and the immunity of the vaccinated population against acquiring the disease  by $u_4(t)$.  

We have the objective functional
\begin{equation*}
\label{eq:vac_cost}
    J_V(u_3, u_4) = \int_0^{t_f} \left[ I(t) + \frac{B_3}{2} u_3^2(t) + \frac{B_4}{2} u_4^2 (t) \right]dt,
\end{equation*}
which is subject to (\ref{eq:vac model}). This objective functional involves increased vaccination $u_3(t)$ and the vaccine-efficacy control $u_4(t)$, where $B_3$ and $B_4$ are the weight constants representing the relative cost of implementing each respective controls. We need to find the optimal controls $u_3^*(t)$ and $u_4^*(t)$ such that 
 \begin{equation*}
 \label{eq:vac_cost_min}
     J_V(u_3^*, u_4^*) = \min_{\mathcal{U}_V} \{J_V(u_3, u_4)\},
 \end{equation*} 
 where $\mathcal{U}_V = \left\{ (u_3, u_4) | u_i :[0, t_f] \rightarrow [a_i, b_i], \, i = 3,4, \,\text{ is Lebesgue integrable}\right\}$ is the control set. We consider the lower bound $a_{i}=0$ and upper bounds $b_{i}=1$, for $i= 3, 4$.

\subsubsection{Characterization of optimal control for vaccination strategy}

Similarly, we use Pontryagin's Maximum Principle \cite{pontryagin_mathematical_1986} to show necessary conditions of optimal control. We define the Hamiltonian denoted by $H_V$ as follows:
\begin{equation}
\label{eq:vac_Hamiltonian}
    \begin{split}
        H_V & = I(t) + \frac{B_3}{2} u_3^2 (t) + \frac{B_4}{2} u_4^2 (t) + \lambda_{{V}_1} \left[ (1-u_3(t))\Lambda + \omega V - \mu S - \frac{\beta S I}{H + I} \right] \\
         & \quad + \lambda_{{V}_2} \left( u_3(t) \Lambda - (\mu + \omega)V - (1- u_4(t))\frac{\beta V I}{H + I} \right)  \\
         & \quad + \lambda_{{V}_3} \left( \frac{\beta S I}{H + I} + [1- u_4(t)]\frac{\beta V I}{H + I} - (\mu + \delta) I\right). 
    \end{split}
\end{equation}
 
\begin{theorem}
There exists optimal controls  $u_3^*(t)$ and $u_4^*(t)$ and solutions $S^*, V^*, I^*$ of the corresponding state system (\ref{eq:opt vac model}) that minimize the objective functional $J_V (u_3(t), u_4(t))$ over $\mathcal{U}_V$. Then there exists adjoint variables $\lambda_{V_1}, \lambda_{{V}_2}$ and $\lambda_{{V}_3}$ satisfying
\begin{equation*}
\label{eq:vac_adjoint}
\begin{split}
   \dfrac{d\lambda_{{V}_1}}{dt} ={}& \lambda_{{V}_1} \left(\mu + \dfrac{\beta I}{H+I}\right) - \lambda_{{V}_3} \left( \dfrac{\beta I}{H+I}\right), \\    
    \dfrac{d\lambda_{{V}_2}}{dt}={}& - \lambda_{{V}_1} \omega + \lambda_{{V}_2}\left( \mu + \omega + [1-u_4(t)]\dfrac{ \beta I}{H+I} \right) - \lambda_{{V}_3} \left[ 1- u_4(t) \right] \dfrac{\beta I}{H+I},\\ 
    \dfrac{d\lambda_{{V}_3}}{dt}={}&  -1 + \lambda_{{V}_1}\left[ \dfrac{H \beta S}{(H+I)^2} \right] + \lambda_{{V}_2}\left[ [1-u_4(t)]\dfrac{H \beta V}{(H+I)^2} \right] \\
    \qquad \qquad& - \lambda_{{V}_3} \left[ \dfrac{H \beta S}{(H + I)^2} + [1-u_4(t)] \dfrac{H \beta V}{(H + I)^2} - (\mu + \delta) \right], 
\end{split}
\end{equation*} 
with transversality conditions $\lambda_{{V}_i}(t_f) = 0$, for $i = 1,2,3$. Furthermore, 
\begin{equation}
\label{eq:vac_u3u4}
\begin{array}{ll}
     &   u_3^* = \min \left\{ b_3, \max \left\{ a_3,  \dfrac{ \left(\lambda_{{V}_1}-\lambda_{{V}_2} \right) \Lambda}{B_3} \right\} \right\} \quad and \quad  u_4^* = \min \left\{ b_4, \max \left\{ a_4,  \dfrac{ \left(\lambda_{{V}_3}-\lambda_{{V}_2} \right) \beta V I}{B_4 (H + I)} \right\} \right\}.
\end{array}
\end{equation}
\end{theorem}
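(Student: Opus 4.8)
The plan is to reproduce the Pontryagin-Maximum-Principle argument already used for the isolation theorem, now adapted to the three-compartment vaccination system (\ref{eq:opt vac model}) and the Hamiltonian $H_V$ in (\ref{eq:vac_Hamiltonian}). I would organize the proof into three stages: existence of an optimal pair $(u_3^*, u_4^*)$, derivation of the adjoint system with transversality conditions, and characterization of the minimizing controls. Since the structure mirrors the isolation case, most of the work is bookkeeping for the extra $\omega V$ and vaccine-efficacy terms.

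For existence I would invoke the Fleming–Rishel theorem, as in the isolation proof. The steps are to check that the admissible set is nonempty (any measurable control valued in $[0,1]^2$ produces a state solution), that $\mathcal{U}_V$ is closed and convex, that the right-hand side of (\ref{eq:opt vac model}) is bounded above by a linear function of the states and controls, and that the integrand $I + \tfrac{B_3}{2}u_3^2 + \tfrac{B_4}{2}u_4^2$ is convex in $(u_3, u_4)$ and bounded below by a coercive function of the controls. Nonnegativity of $S, V, I$ and boundedness of the total population, needed for the linear-growth bound, would be obtained from the model structure exactly as for the earlier system.

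For the adjoint system I would set $d\lambda_{V_i}/dt = -\,\partial H_V/\partial(\text{state}_i)$ for $S$, $V$, $I$ in turn. The only derivative requiring care is $\partial H_V/\partial I$, where the half-saturated terms $\beta S I/(H+I)$ and $(1-u_4)\beta V I/(H+I)$ must be differentiated by the quotient rule, producing $H\beta S/(H+I)^2$ and $(1-u_4)H\beta V/(H+I)^2$; the derivatives in $S$ and $V$ are linear and immediate. The transversality conditions $\lambda_{V_i}(t_f)=0$ follow from the free-endpoint formulation, since no terminal payoff is imposed. For the controls, I would use that $H_V$ is strictly convex in each $u_j$ (the weights $B_3/2$ and $B_4/2$ are positive), so the interior minimizer solves $\partial H_V/\partial u_3 = 0$ and $\partial H_V/\partial u_4 = 0$, giving $u_3 = (\lambda_{V_1}-\lambda_{V_2})\Lambda/B_3$ and $u_4 = (\lambda_{V_3}-\lambda_{V_2})\beta V I/[B_4(H+I)]$; clamping onto $[a_3,b_3]$ and $[a_4,b_4]$ with the standard $\min$–$\max$ projection then yields (\ref{eq:vac_u3u4}).

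The main obstacle is the existence stage rather than the calculus. Establishing the a priori boundedness of the state trajectories, so that the right-hand side of (\ref{eq:opt vac model}) genuinely satisfies the linear-growth hypothesis of Fleming–Rishel, is the only step that is not mechanical; the subtlety is that the half-saturated incidence is nonlinear, so the linear majorant must be justified using the invariant bound on $N=S+V+I$. Once that bound is secured, the convexity of $H_V$ in the controls makes the stationarity conditions sufficient for a minimum, and the adjoint derivation and control formulas follow routinely.
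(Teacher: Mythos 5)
Your proposal follows essentially the same route as the paper's own proof: existence of the optimal pair via Fleming--Rishel, adjoint equations and transversality conditions from Pontryagin's Maximum Principle applied to the Hamiltonian $H_V$ in (\ref{eq:vac_Hamiltonian}), and the control characterization from the stationarity conditions $\partial H_V/\partial u_3 = 0$ and $\partial H_V/\partial u_4 = 0$ followed by the $\min$--$\max$ projection onto the bounds. Your treatment of the existence step (linear-growth bound via the invariant bound on $N$, coercivity and convexity of the integrand) is in fact more careful than the paper's, which simply asserts these properties, but the argument is the same.
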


\begin{proof}
Similarly, the existence of optimal control $(u_3^*, u_4^*)$ is given by the result of Fleming and Rishel (1975). Boundedness of the solution of our system (\ref{eq:vac model}) shows the existence of a solution for the system. We have nonnegative values for the controls and state variables.  In our minimizing problem, we have a convex integrand for $J_V$ with respect to $(u_3, u_4)$. By definition, the control set is closed, convex, and compact which shows the existence of optimality solutions in our optimal system. We use Pontryagin's Maximum Principle \cite{pontryagin_mathematical_1986} to obtain the adjoint equations and transversality conditions. We differentiate the Hamiltonian (\ref{eq:vac_Hamiltonian}) with respect to the corresponding state variables as follows:

\begin{equation*}
       \dfrac{d\lambda_{V_1}}{dt} = - \dfrac{\partial H_V}{\partial S} , \quad
       \dfrac{d\lambda_{V_2}}{dt} = -\dfrac{\partial H_V}{\partial V}, \quad
       \dfrac{d\lambda_{V_3}}{dt} = - \dfrac{\partial H_V}{\partial I}
\end{equation*} 
with $\lambda_{V_i}(t_f)=0$ where $i = 1, 2, 3$. Using the optimality condition 

\begin{equation*}
\begin{array}{ll}
     & \dfrac{\partial H_V}{\partial u_3} = B_3 u_3(t) - \lambda_{V_1} \Lambda + \lambda_{V_2} \Lambda = 0 \quad and \quad \dfrac{\partial H_V}{\partial u_4} = B_4 u_4(t) + \lambda_{V_2} \dfrac{\beta V I}{H + I} - \lambda_{V_3} \dfrac{\beta V I}{H+I} = 0.
\end{array}
\end{equation*}

we derive the optimal controls (\ref{eq:vac_u3u4}). We consider the bounds for the control and conclude the characterization for $u_3^*$ and $u_4^*$

\begin{equation*}
\begin{array}{ll}
     &   u_3^* = \min \left\{ 1, \max \left\{0,  \dfrac{ \left(\lambda_{{V}_1}-\lambda_{{V}_2} \right) \Lambda}{B_3} \right\} \right\} \quad and \quad  u_4^* = \min \left\{ 1, \max \left\{ 0,  \dfrac{ \left(\lambda_{{V}_3}-\lambda_{{V}_2} \right) \beta V I}{B_4 (H + I)} \right\} \right\}.
\end{array}
\end{equation*}

\end{proof}

\subsection{Culling}

Finally, we administer optimal control to the culling model (\ref{eq:cul model}). Thus we have
\begin{equation}
\label{eq:opt cul model}
\begin{split}
        \dot{S} &= \Lambda - \mu S - \dfrac{u_5(t)SI}{H + I} - \dfrac{\beta S I}{H + I}, \\
        \dot{I} &=\dfrac{\beta S I}{H + I} - (\mu + \delta)I - \dfrac{{{u_6(t)}} I^2}{H + I}.
\end{split}
\end{equation}

We represent the frequency of culling the susceptible population by $u_5(t)$ and frequency of culling the infected population by $u_6(t)$. We have the objective functional 
\begin{equation*}
\label{eq:cul_cost}
    J_C(u_5, u_6) = \int_0^{t_f} \left[ I(t) + \frac{B_5}{2} u_5^2(t) + \frac{B_6}{2} u_6^2 (t) \right]dt,
\end{equation*} 
which is subject to (\ref{eq:cul model}). The objective functional includes the susceptible and infected culling control denoted by $u_5(t)$ and $u_6(t)$, respectively, with $B_5$ and $B_6$ as the weight constants representing the relative cost of implementing each respective controls. Hence we have to find the optimal controls $u_5^*$ and $u_6^*$ such that 

 \begin{equation*}
 \label{eq:cul_cost_min}
     J_C(u_5^*, u_6^*) = \min_{\mathcal{U}_V} \{J_C(u_5, u_6)\},
 \end{equation*} 
 where $\mathcal{U}_C = \left\{ (u_5, u_6) | [0, t_f] \rightarrow [a_i, b_i], \, i = 5,6, \,\text{ is Lebesgue integrable}\right\}$ is the control set. We consider the lower bound $a_{i}=0$ and upper bounds $b_{i}=1$, for $i= 5, 6.$

\subsubsection{Characterization of optimal control for culling strategy}
 
We utilize the Pontryagin's Maximum Principle \cite{pontryagin_mathematical_1986} to show the necessary conditions of optimal control. We define the Hamiltonian denoted by $H_C$ as follows:
\begin{equation}
\label{eq:cul_Hamiltonian}
    \begin{split}
        H_C & = I(t) + \frac{B_5}{2} u_5^2 (t) + \frac{B_6}{2} u_6^2 (t) + \lambda_{{C}_1} \left[\Lambda - \mu S - \frac{u_5(t) S I}{H + I} - \frac{\beta S I}{H+I}\right] \\
         & \quad + \lambda_{{C}_2} \left( \frac{\beta S I}{H+I} - (\mu + \delta)I - \frac{u_6(t)I^2}{H + I} \right).
    \end{split}
\end{equation}

\begin{theorem}
There exists optimal controls  $u_5^*(t)$ and $u_6^*(t)$ and solutions $S^*, I^*$ of the corresponding state system (\ref{eq:opt cul model}) that minimize the objective functional $J_C (u_5(t), u_6(t))$ over $\mathcal{U}_C$. Then there exists adjoint variables $\lambda_{C_1}$ and $\lambda_{C_2}$ satisfying
\begin{equation*}
\label{eq:cul_adjoint}
\begin{split}
   \dfrac{d\lambda_{{C}_1}}{dt} &= \lambda_{{C}_1} \left[\mu  +\dfrac{u_5(t) I}{H+I} + \dfrac{\beta I}{H+I}\right] - \lambda_{{C}_2} \dfrac{\beta I}{H+I}, \\ 
       \dfrac{d\lambda_{{C}_2}}{dt}&=  -1 + \lambda_{{C}_1}\left[ \dfrac{u_5(t) HS}{(H+I)^2} + \dfrac{H \beta S}{(H+I)^2} \right] - \lambda_{{C}_2}\left[ \dfrac{H \beta S}{(H+I)^2} - (\mu + \delta) - \dfrac{(2H + I) u_6(t) I}{(H+I)^2} \right] \\
\end{split}
\end{equation*} 
with transversality conditions $\lambda_{C_i}(t_f) = 0$ for $i = 1,2$. Furthermore, 
\begin{equation}
\label{eq:cul_u5u6}
    \begin{array}{ll}
         &  u_5^* = \min \left\{ b_5, \max \left\{ a_5,  \dfrac{ \lambda_{\mathcal{C}_1} SI}{B_5(H+I)} \right\} \right\} \quad and \quad u_6^* = \min \left\{ b_6, \max \left\{ a_6,  \dfrac{ \lambda_{\mathcal{C}_2}I^2}{B_6 (H + I)} \right\} \right\}.
    \end{array}
\end{equation}
\end{theorem}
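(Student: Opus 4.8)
The plan is to follow the same Pontryagin--Maximum--Principle route used for the isolation and vaccination theorems, since the culling statement is structurally identical. First I would establish existence of the optimal pair $(u_5^*, u_6^*)$ by invoking the Fleming--Rishel theorem. This requires checking four ingredients: the solutions of the state system \eqref{eq:opt cul model} are nonnegative and bounded (which follows from the boundedness argument already used for the other models, the total population being controlled by $\Lambda$ and $\mu$); the admissible control set $\mathcal{U}_C = [a_5,b_5]\times[a_6,b_6]$ is closed, convex and compact; the integrand $I + \tfrac{B_5}{2}u_5^2 + \tfrac{B_6}{2}u_6^2$ is convex in $(u_5,u_6)$, which is immediate because it is quadratic in the controls with positive weights $B_5, B_6$; and the right-hand side of the state system is bounded by a linear function of the states and controls. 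These together guarantee a minimizer exists over $\mathcal{U}_C$.

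Next I would apply Pontryagin's Maximum Principle to the Hamiltonian $H_C$ in \eqref{eq:cul_Hamiltonian}. The adjoint system comes from $\tfrac{d\lambda_{C_1}}{dt} = -\partial H_C/\partial S$ and $\tfrac{d\lambda_{C_2}}{dt} = -\partial H_C/\partial I$, with transversality conditions $\lambda_{C_i}(t_f)=0$ arising because the terminal state is free and there is no payoff term at $t_f$. The derivative with respect to $S$ is routine since every $S$-dependence in $H_C$ is linear in $S$, yielding the bracket $\bigl[\mu + \tfrac{u_5 I}{H+I} + \tfrac{\beta I}{H+I}\bigr]$ multiplying $\lambda_{C_1}$.

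The one step demanding care is $\partial H_C/\partial I$, because the half-saturated terms $\tfrac{\beta S I}{H+I}$ and especially $\tfrac{u_6 I^2}{H+I}$ are nonlinear quotients in $I$. Differentiating $\tfrac{\beta S I}{H+I}$ gives the familiar $\tfrac{H\beta S}{(H+I)^2}$, while the quotient rule on $\tfrac{u_6 I^2}{H+I}$ produces $\tfrac{u_6\,I(2H+I)}{(H+I)^2}$, which is exactly the $(2H+I)u_6 I/(H+I)^2$ term appearing in the stated $\lambda_{C_2}$ equation. This is the computation most likely to introduce algebraic slips, so it is the part I would verify most carefully.

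Finally I would recover the control characterization from the optimality conditions on the interior of $\mathcal{U}_C$. Setting $\partial H_C/\partial u_5 = B_5 u_5 - \lambda_{C_1}\tfrac{SI}{H+I} = 0$ and $\partial H_C/\partial u_6 = B_6 u_6 - \lambda_{C_2}\tfrac{I^2}{H+I} = 0$ gives $u_5 = \tfrac{\lambda_{C_1}SI}{B_5(H+I)}$ and $u_6 = \tfrac{\lambda_{C_2}I^2}{B_6(H+I)}$. Projecting these onto the admissible interval $[a_i,b_i]$ by the standard $\min\{b_i,\max\{a_i,\cdot\}\}$ clipping yields the characterization \eqref{eq:cul_u5u6}, completing the proof.
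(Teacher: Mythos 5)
Your proposal is correct and follows essentially the same route as the paper's own proof: Fleming--Rishel for existence, Pontryagin's Maximum Principle for the adjoint system $\frac{d\lambda_{C_1}}{dt} = -\partial H_C/\partial S$, $\frac{d\lambda_{C_2}}{dt} = -\partial H_C/\partial I$ with transversality conditions, and the optimality conditions $\partial H_C/\partial u_5 = 0$, $\partial H_C/\partial u_6 = 0$ clipped to the control bounds. Your explicit verification of the quotient-rule computation $\frac{d}{dI}\bigl(\frac{u_6 I^2}{H+I}\bigr) = \frac{u_6 I(2H+I)}{(H+I)^2}$ and the linear-growth bound for Fleming--Rishel are slightly more detailed than the paper's write-up, but the argument is the same.
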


\begin{proof}
The existence of optimal control $(u_5^*, u_6^*)$ is given by the result of Fleming and Rishel (1975). Boundedness of the solution of our system (\ref{eq:cul model}) shows the existence of a solution for the system. We have nonnegative values for the controls and state variables.  In our minimizing problem, we have a convex integrand for $J_C$ with respect to $(u_5, u_6)$. By definition, the control set is closed, convex, and compact which shows the existence of optimality solutions in our optimal system. By Pontryagin's Maximum Principle \cite{pontryagin_mathematical_1986}, we obtain the adjoint equations and transversality conditions. We differentiate the Hamiltonian (\ref{eq:cul_Hamiltonian}) with respect to the corresponding state variables as follows:

\begin{equation*}
       \dfrac{d\lambda_{C_1}}{dt} = - \dfrac{\partial H_C}{\partial S} \quad and \quad \dfrac{d\lambda_{C_2}}{dt} = -\dfrac{\partial H_C}{\partial I}
\end{equation*} 
with $\lambda_{C_i}(t_f)=0$ where $i = 1, 2$. We consider the optimality condition 

\begin{equation*}
\begin{array}{ll}
     & \dfrac{\partial H_C}{\partial u_5} = B_5 u_5(t) -  \dfrac{\lambda_{C_1}SI}{H+I} = 0 \quad and \quad \dfrac{\partial H_C}{\partial u_6} = B_6 u_6(t) - \dfrac{\lambda_{C_2}I^2}{H+I} = 0,
\end{array}
\end{equation*}

to derive the optimal controls (\ref{eq:cul_u5u6}). We consider the bounds of the controls and get the characterization for $u_5^*$ and $u_6^*$
\begin{equation*}
    \begin{array}{ll}
         &  u_5^* = \min \left\{ 1, \max \left\{ 0,  \dfrac{ \lambda_{\mathcal{C}_1} SI}{B_5(H+I)} \right\} \right\} \quad and \quad u_6^* = \min \left\{ 1, \max \left\{ 0,  \dfrac{ \lambda_{\mathcal{C}_2}I^2}{B_6 (H + I)} \right\} \right\}.
    \end{array}
\end{equation*}
\end{proof}


\section{Numerical simulations}
The parameter values applied to generate our simulations are listed in the table in the appendix. The initial conditions of the simulations are based on  the Philippines' H5N6 outbreak report given by the OIE \cite{noauthor_oie_2018}. We set $S(0) = 407\,\,837$, $I(0) = 73\,\,360$, $T(0) = 0$, $R(0) = 0$, and the total population of birds $N(0) = 481 \, \, 197$.


\begin{figure}[ht]
\begin{center}
    {\includegraphics[width=2.5in]{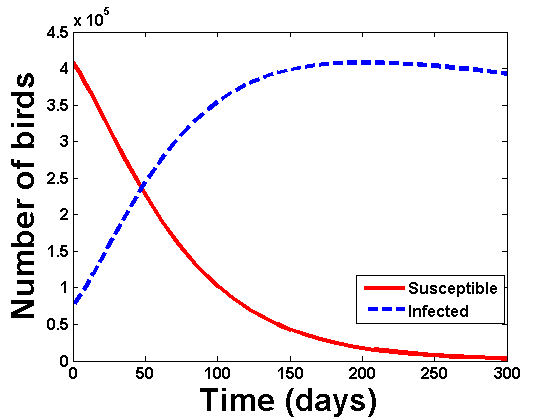}}
\vspace*{8pt}
  \caption{Simulation results showing the transmission dynamics of H5N6 in the Philippines with no intervention strategy. We use initial conditions and parameter values as follows: $S(0) = 407\,\,837$, $I = 73\,\,360$, ${\Lambda = \frac{2\,\,060}{365}}$, $\mu = 3.4246 \times 10^{-4}$, $\beta = 0.025$, $H = 180 \,\,000$, $\delta = 4 \times 10^{-4}$ }
  \label{fig:avian simulation}
\end{center}
\end{figure}
Previous studies suggested that the basic reproduction number for the presence of avian influenza without applying any intervention strategy is $\mathcal{R}_A = 3$ \cite{mills_transmissibility_2004,ward_estimation_2009}. Given this assumption, we have calculated the transmissibility of the disease ($\beta= 0.025$) based on (\ref{eq:avian RA}). Without any control strategy, avian influenza will become endemic in the poultry population as shown in Fig.~\ref{fig:avian simulation}. After 50 days, the population of the infected poultry exceeds that of susceptible poultry, with all birds eventually infected or dead.

\begin{figure}[ht]
\begin{center}
    \includegraphics[width=4in]{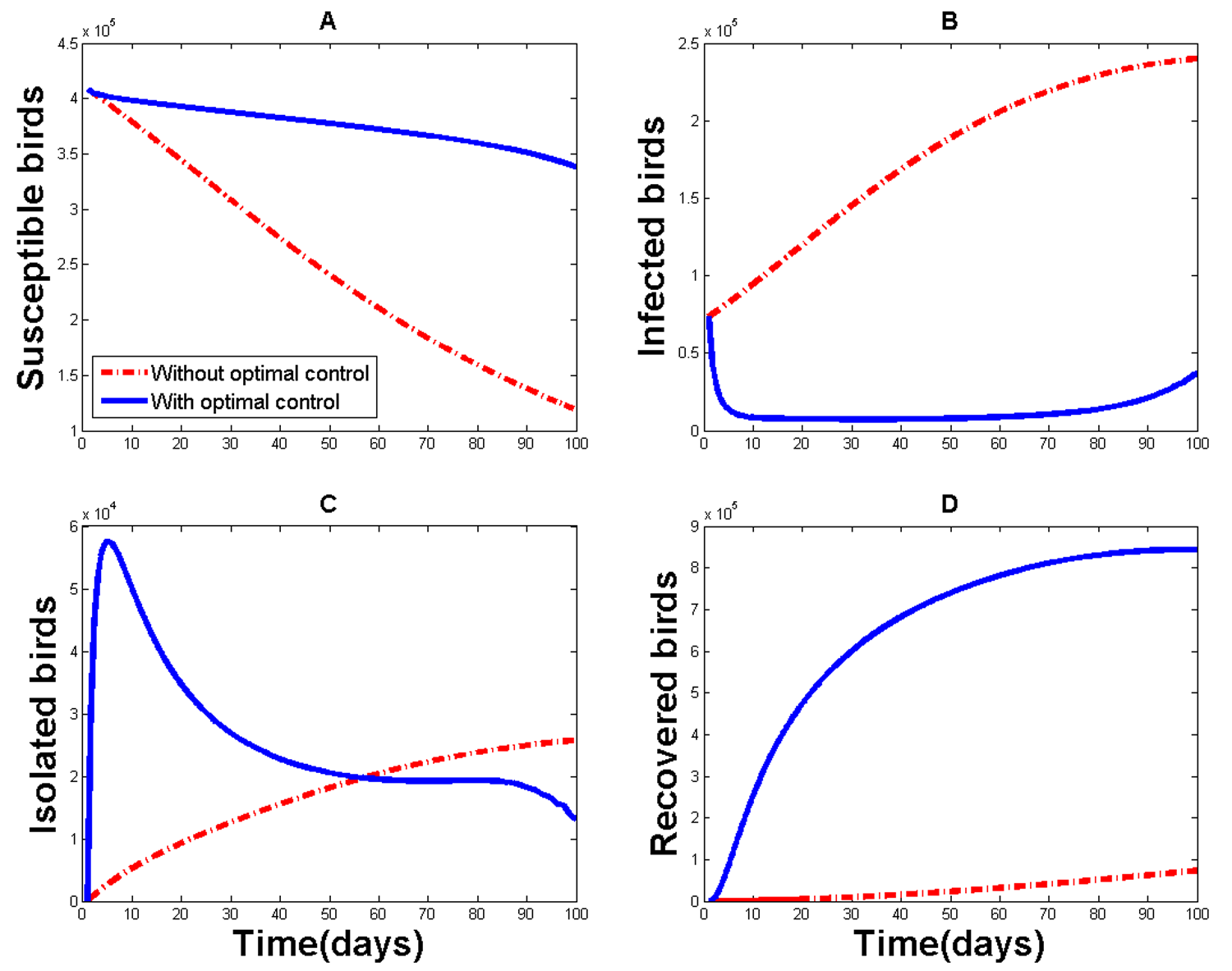}
    \caption{Applying the isolation strategy with (blue solid line) and without (red dashed line) optimal control in the population of susceptible (A), infected (B), isolated (C) and recovered (D) birds.}
    \label{fig:opt_iso_with_without}
\end{center}
    
\end{figure}

Figs.~\ref{fig:opt_iso_with_without}--\ref{fig:opt_iso_both_u1only} illustrate the effects of applying optimal control to isolation strategy under different approaches. These simulations suggest that isolation must be complemented by treatment, where the cheaper cost of implementation works best since it will enable us to apply the strategy in to a larger population of poultry. Application of optimal controls $u_1^*$(t) and $u_2^*$(t) in the susceptible, infected, isolated and recovered population is clearly better than the absence of optimal control (Fig.~\ref{fig:opt_iso_with_without}). We can observe a slower decline of susceptible birds, an initial reduction in infected birds and a delayed increase in infection. More infected birds are isolated (Fig.~\ref{fig:opt_iso_with_without}{C}), and we have a higher number of birds that will recover after going through isolation (Fig. \ref{fig:opt_iso_with_without}{D}). 


\begin{figure}[ht]
   \begin{center}
       \includegraphics[width=4in]{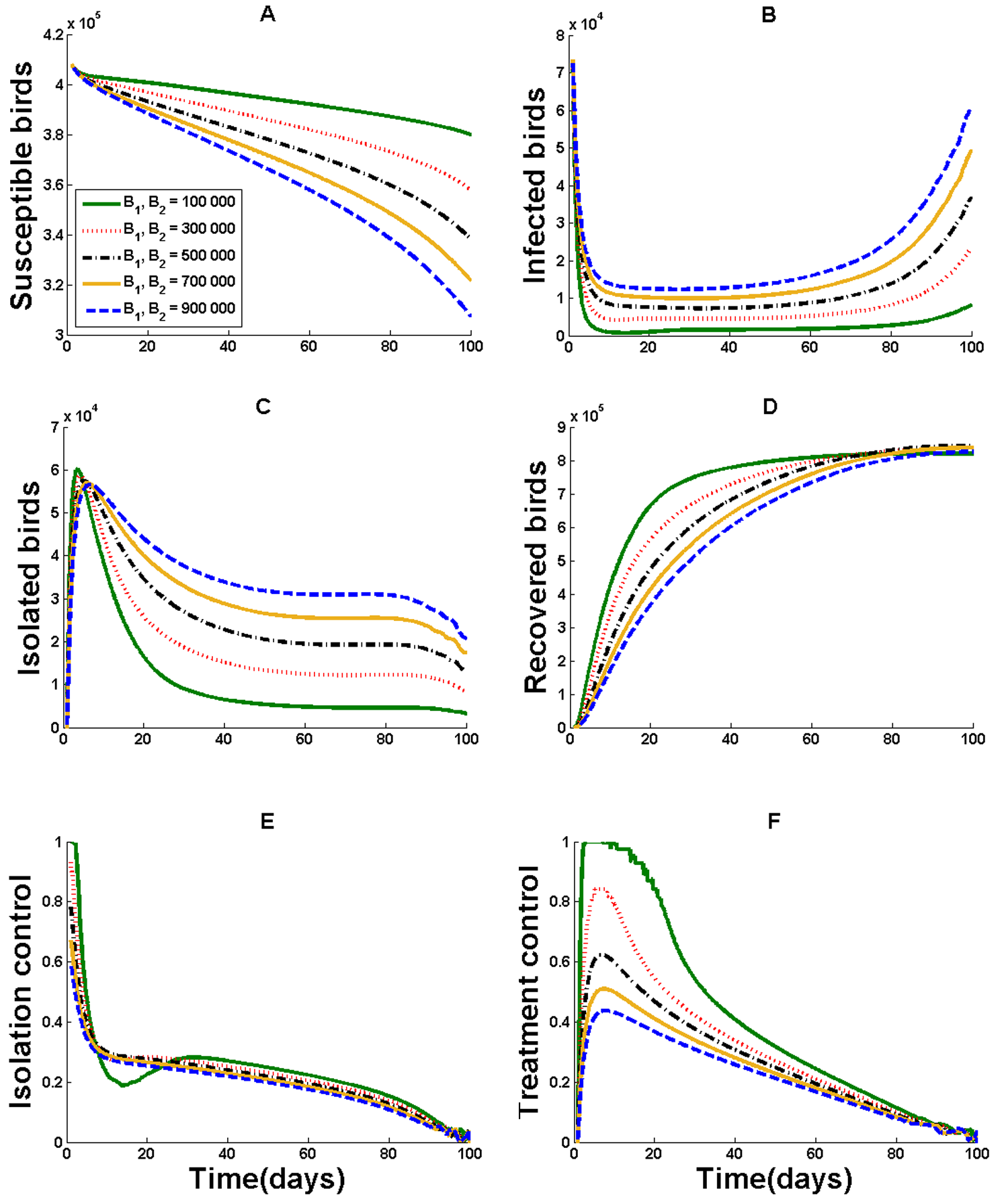}
    \caption{Application of isolation with optimal control to the population of susceptible (A), infected (B), isolated (C) and recovered (D) birds along with isolation control (E) and treatment control (F) for varying values of $B_i$, for i= 1,2, from $100,000$ to $900,000$}
    \label{fig:opt_iso_variedB}
   \end{center}
    
\end{figure}

A cheaper relative cost of implementing both controls $u_1(t)$ and $u_2 (t)$ leads to lower infected populations, as illustrated in Fig.~\ref{fig:opt_iso_variedB}. We can observe that when we have lower values for $B_1$ and $B_2$, the susceptible population has a slower decline, there are fewer infected and isolated birds, and there are more recovered birds. Thus, the cheaper controls are more effective in implementing both isolation and treatment controls.

\begin{figure}[ht]
\begin{center}
    \includegraphics[width=4in]{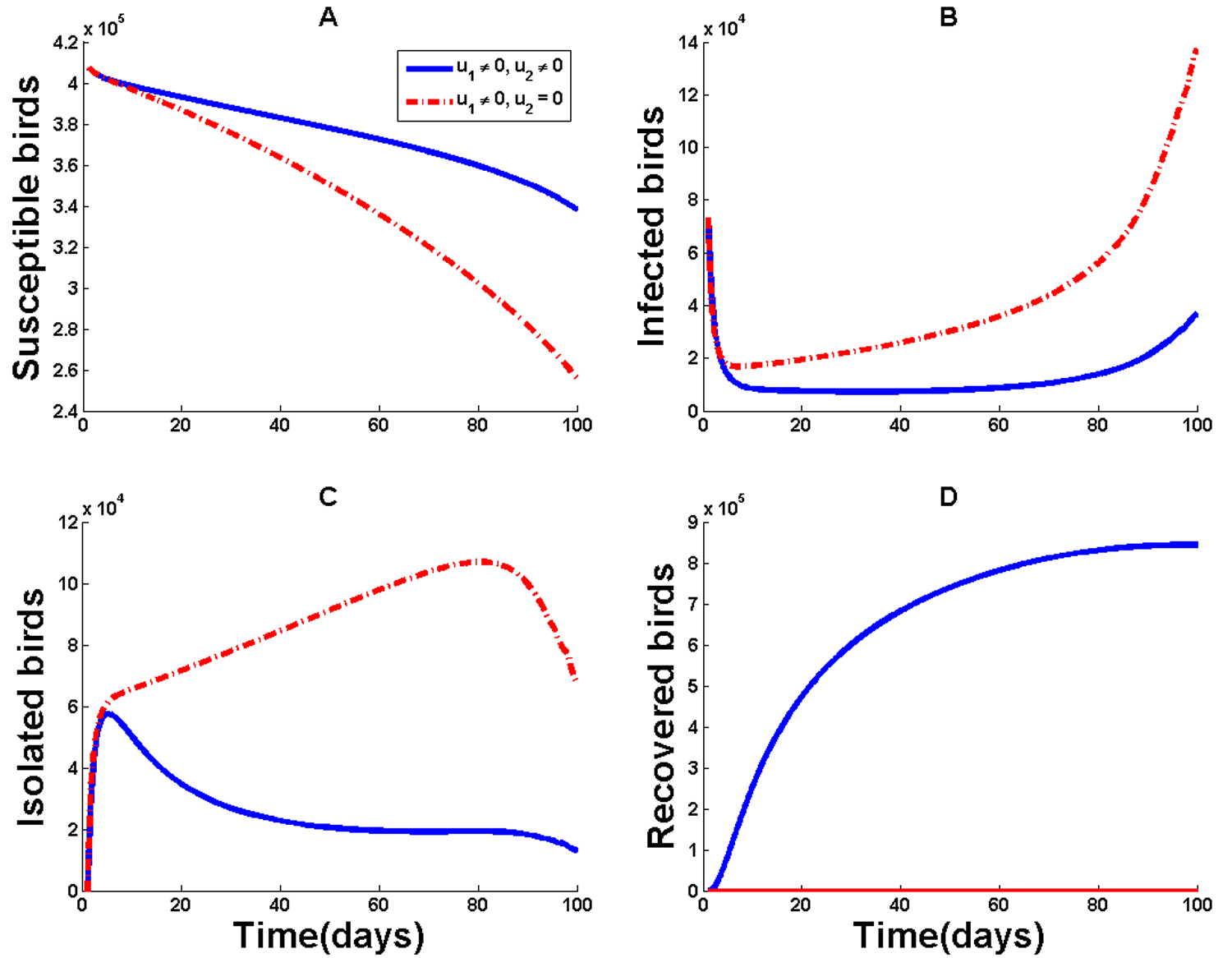}
    \caption{Isolation strategy with the optimal approach and with consideration of using both isolation and treatment control (blue solid line) and using isolation control (red dashed line) only to the population of susceptible (A), infected (B), isolated (C) and recovered (D) birds.}
    \label{fig:opt_iso_both_u1only}
\end{center}
    
\end{figure}

It is evident that using isolation together with treatment showed better results in all populations compared to implementing isolation alone, as depicted in Fig.~\ref{fig:opt_iso_both_u1only}. In applying both controls, the susceptible populations decrease slowly; infected birds are eliminated from the poultry population; and isolated birds increase within 5 days, then decrease afterward. This is due to the release of the birds and the effect of treatment where most of the isolated birds are transferred to the recovered population. Without  treatment isolated birds increase continuously then decrease after 85 days, as illustrated in Fig.~\ref{fig:opt_iso_both_u1only}{C}. The birds were released from isolation zone even though they are still infectious. Our results suggest that the isolation strategy can be maximized by administering isolation together with treatment.

Empirically, we have found that, through the application of optimal control to isolation with treatment strategy, it is possible to control an outbreak, as shown in the numerical simulation from Figs.~\ref{fig:opt_iso_with_without}--\ref{fig:opt_iso_both_u1only}. This also suggests that isolation is more effective if utilized  together with treatment. In addition, a cheaper cost of applying both isolation control and treatment control will result in a lower infected population and more recovered birds.

\begin{figure}[ht]
    \begin{center}
    \includegraphics[width=5in]{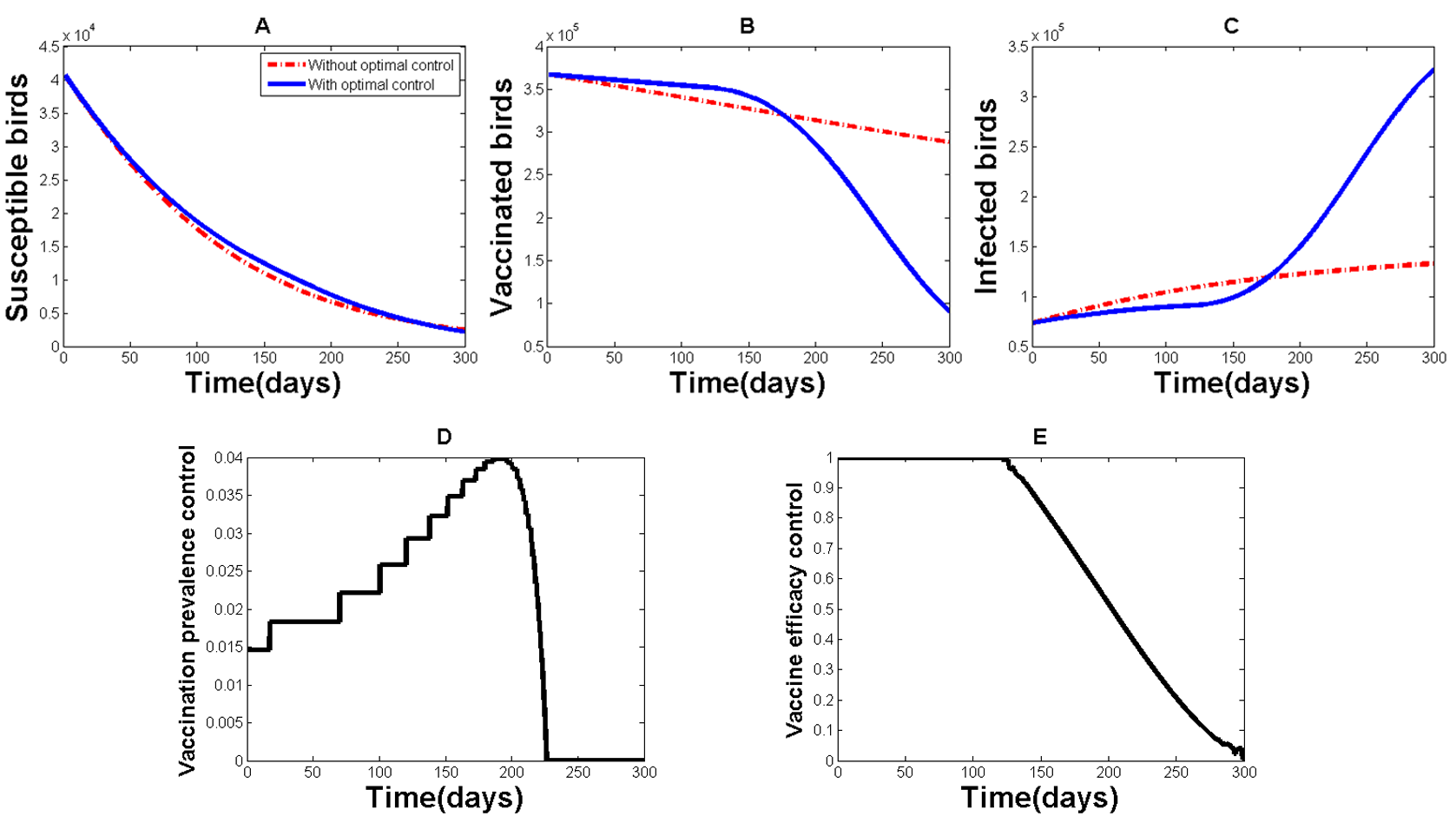}
    \caption{Applying the vaccination strategy with (blue solid line) and without (red dashed line) optimal control in the population of susceptible (A), vaccinated (B) and infected (C) birds, together with the respective values of the increased vaccination (D) and vaccine-efficacy control (E) over time.}
    \label{fig:opt_vac_with_without}
    \end{center}
\end{figure}

Through the application of optimal-control approach in vaccination, we can observe that the diminishing effectiveness of the vaccine results to spread of infection in the vaccinated population, as depicted in Fig.~\ref{fig:opt_vac_with_without}. After 150 days, the vaccine efficacy started to decline causing vaccinated birds to acquire the disease. While in Fig.~\ref{fig:opt_vac_variedB}, taking a lower value for both $B_3$ and $B_4$ provides a higher vaccine efficacy resulting to a higher susceptible and vaccinated population. Hence, for using vaccination strategy, we need to consider cheap vaccine that sustains its effectiveness in a longer period.

\begin{figure}[ht]
\begin{center}
     \includegraphics[width=5in]{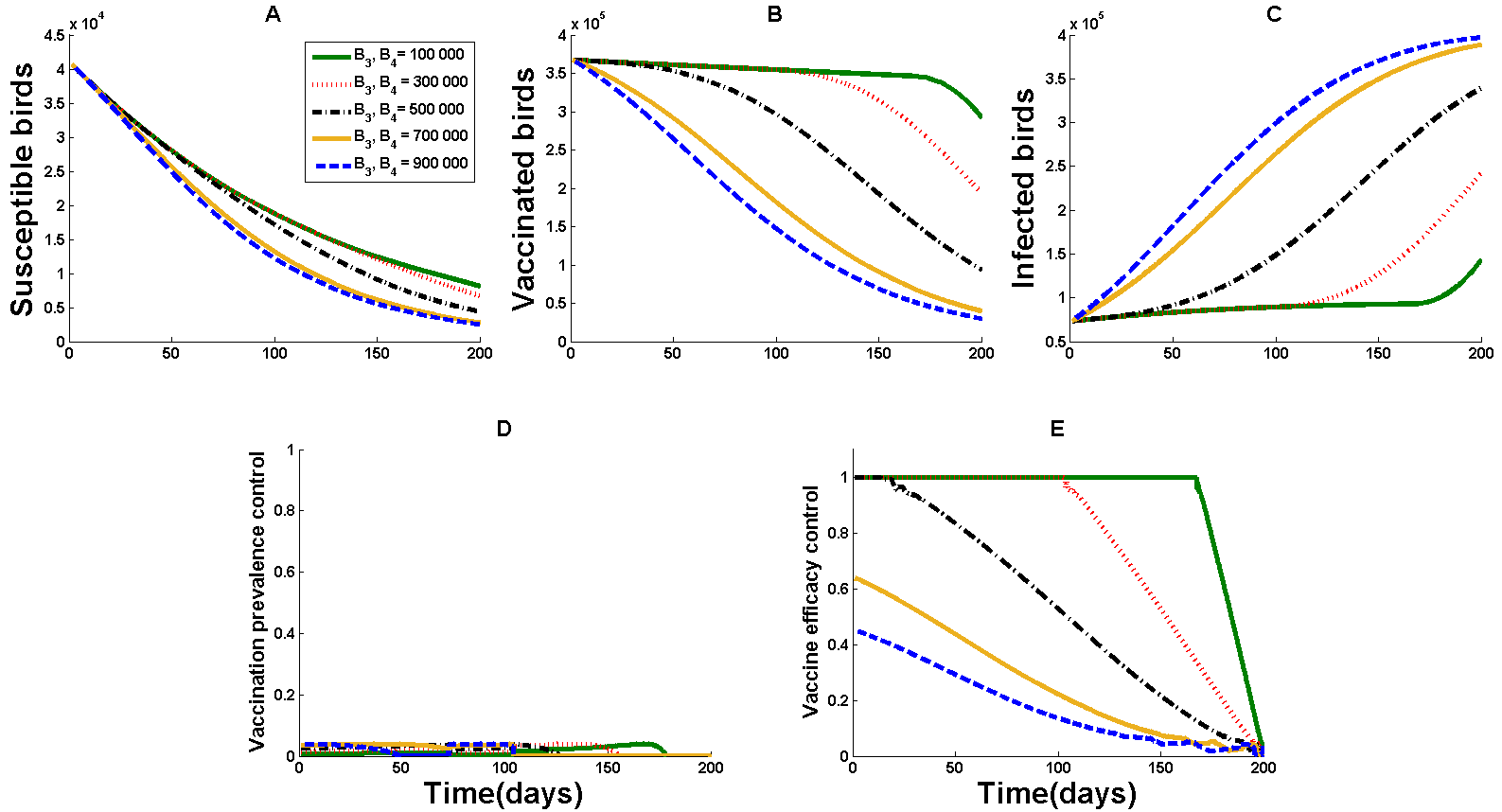}
    \caption{Application of vaccination strategy with optimal control to the population of susceptible (A), vaccinated (B) and infected (C) birds and the increased vaccination (D) and the vaccine-efficacy control (E) with varying values of $B_i$, for i= 3,4, from $100,000$ to $900,000$.}
    \label{fig:opt_vac_variedB}
\end{center}
   
\end{figure}

Simulations shown in Figs.~\ref{fig:opt_vac_with_without}--\ref{fig:opt_vac_variedB} contribute to our understanding that providing immunity to the poultry population is not sufficient to prevent an outbreak due to the possibility of the vaccine to lose its effectiveness. In using an optimal-control approach, we see that a successful immunization strategy highly depends on choosing a long-lasting and an effective vaccine.

Integrating optimal control into a culling strategy results in a lower population for both susceptible and infected birds as compared to using fixed control, as portrayed in Fig. \ref{fig:opt_cul_with_without}. We notice that the decline in the numbers of both susceptible and infected birds occurs faster when optimal control is applied. Culling strategy with cheaper implementation cost results to a  lesser infected population while susceptible population will be in the same level regardless of the implementation cost, as illustrated in Fig.~\ref{fig:opt_cul_variedB}.

\begin{figure}[ht]
  \begin{center}
      \includegraphics[width=3.5in]{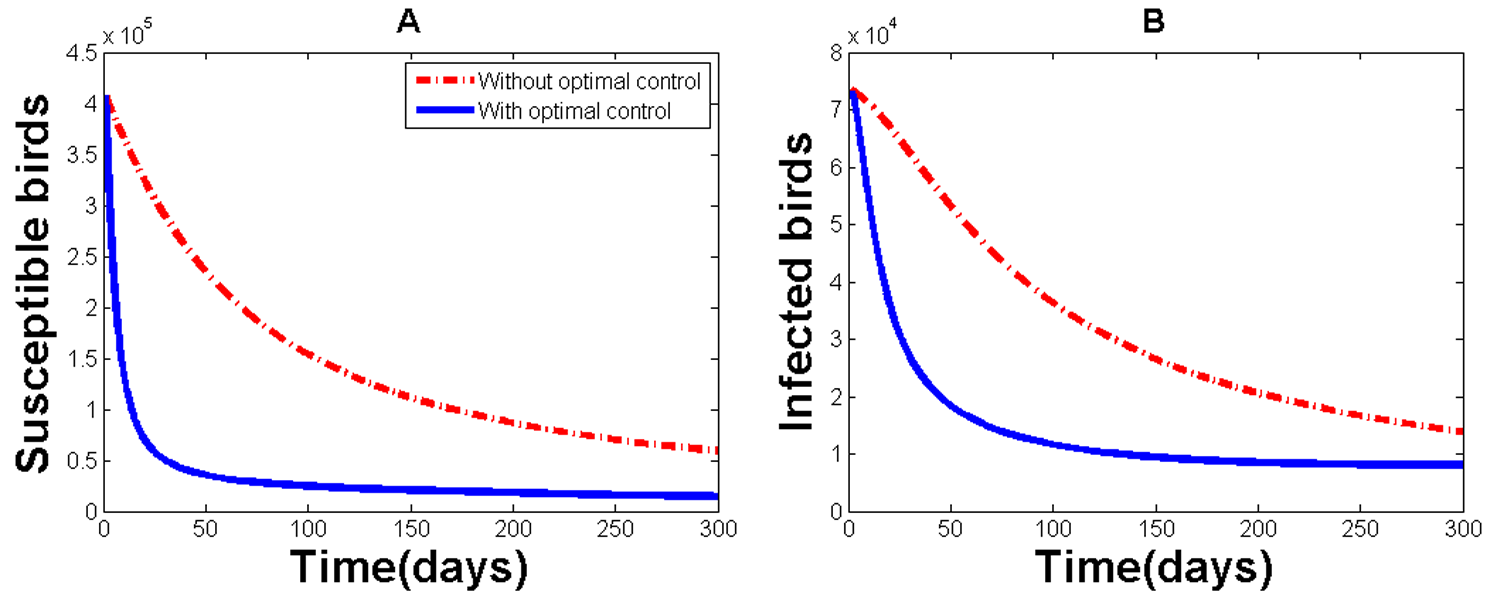}
    \caption{Implementing the culling strategy with optimal control (blue solid line) and without optimal control (red dashed line) in the population of susceptible (A) and infected (B) birds.}
    \label{fig:opt_cul_with_without}
  \end{center}
\end{figure}

\begin{figure}[ht]
 \begin{center}
     \includegraphics[width=3.5in]{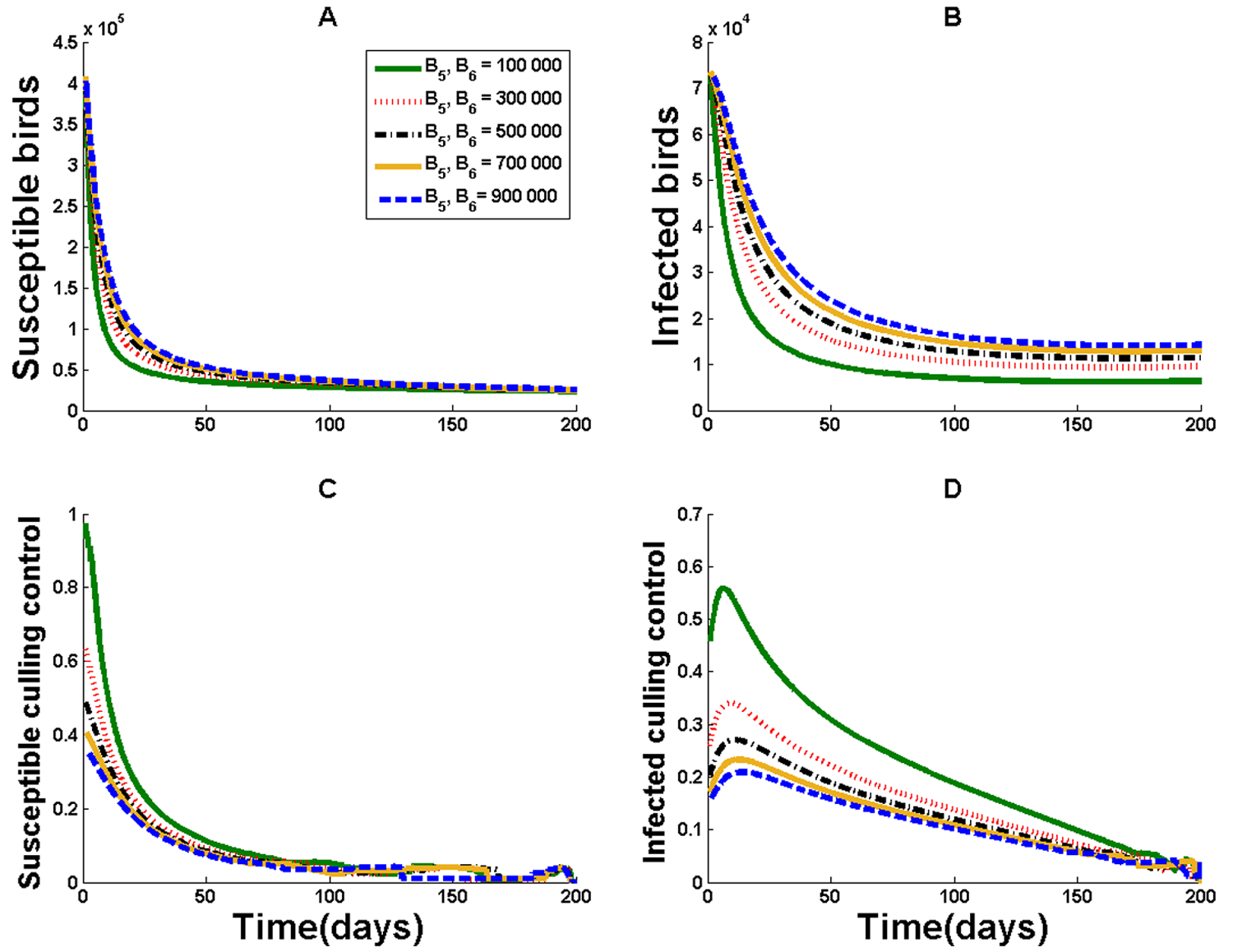}
    \caption{Application of culling strategy with optimal control  to the population of susceptible (A) and infected (B) birds and susceptible culling control (C) and infected culling control (D) with varying values of $B_i$, for i= 5,6, from $100,000$ to $900,000$.}
    \label{fig:opt_cul_variedB}
 \end{center}
    
\end{figure}

Fig.~\ref{fig:opt_cul_variedB}C--D suggests that high culling frequency for both susceptible and infected populations are needed in order to prevent an outbreak. Culling frequency for susceptible birds must be at least 0.30 per day or three times for the first 10 days of the outbreak. The culling frequency for infected birds must be around 0.15--0.6 per day or 2--6 times for the first 10 days and must stay at 0.1--0.3 per day to keep the number of infected birds low.

\begin{figure}[ht]
 \begin{center}
     \includegraphics[width=3.5in]{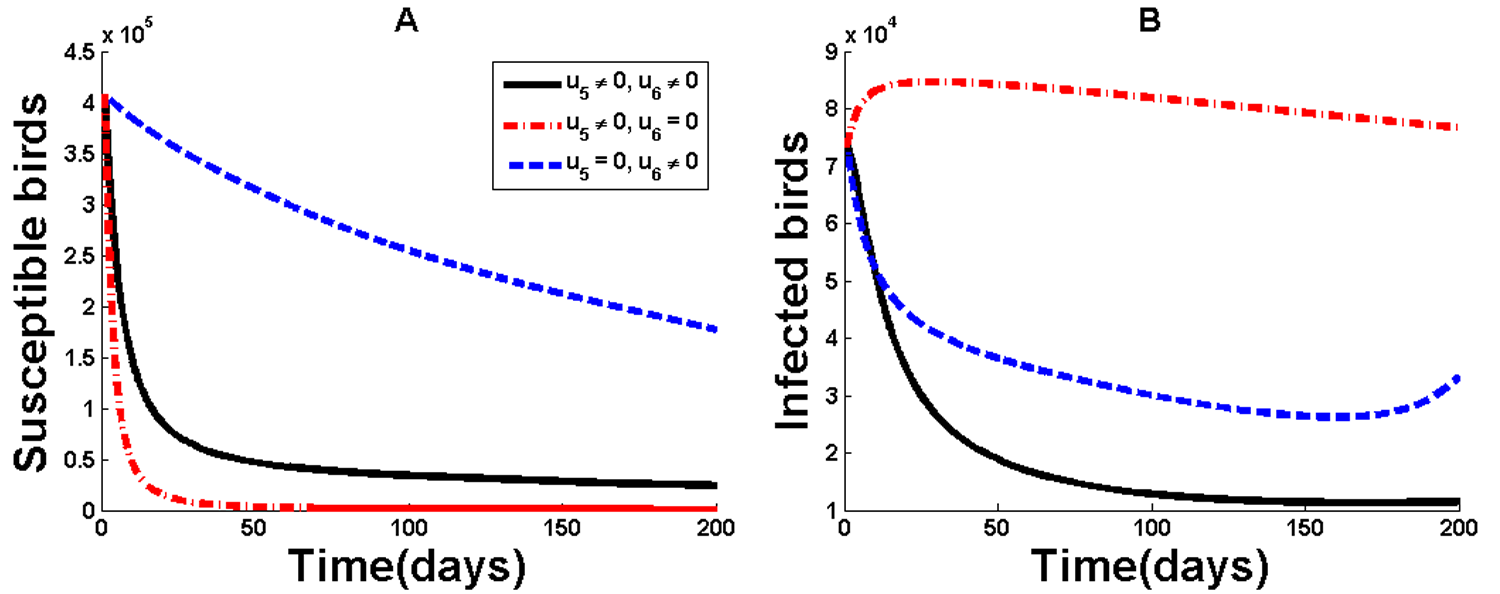}
    \caption{Simulation of culling strategy with the optimal approach and with consideration of using both susceptible culling control $u_5(t)$ and infected culling control $u_6(t)$ (black solid line), using susceptible culling control $u_5(t)$ only (red dotted-dashed line), and using infected culling control $u_6(t)$ to the population of susceptible (A) and infected (B) birds.}
    \label{fig:opt_cul_u67}
 \end{center}
    
\end{figure}

Administering a culling strategy for both susceptible and infected birds is more effective than culling only the infected birds, as indicated in Fig.~\ref{fig:opt_cul_u67}. Looking at the blue dashed line of Fig.~\ref{fig:opt_cul_u67}{A}, we have more susceptible birds if we cull only the infected population, but, as shown in Fig.~\ref{fig:opt_cul_u67}{B}, the number of infected birds increases afterward. This implies that culling only the infected population is not enough to stop the spread of infection. We can infer that culling only the infected population can only be successful if we can eradicate the infected population. Currently, we cannot easily identify infected birds from the poultry population. Culling both susceptible and infected birds led to near eradication of the infected population, and, due to the low number of susceptible birds, further spread of H5N6  would not be possible. Thus, culling both susceptible and infected birds is necessary to eliminate the spread of infection in the poultry population.

\section{Conclusion}

The control strategies we considered include isolation and treatment of infected birds (isolation model), preventive vaccination of poultry (vaccination model), and modified culling of infected and susceptible birds that are at high risk of infection (culling model). In the model where isolation and treatment of infected birds is used as strategy, we extended previous models by considering that some birds that were released from confinement did not recover successfully. In using preventive vaccination, we also included the waning effect of the vaccine (in the model). For the model that depopulates the infected and susceptible birds that are high-risk to infection, we represented culling rate function with respect to half-saturated incidence.

Our results suggest that, when the basic reproduction number ($\mathcal{R}_A$, $\mathcal{R}_T$, $\mathcal{R}_V$, and $\mathcal{R}_C$) for each model is below unity, then the disease-free equilibrium is locally asymptotically stable. All four mathematical models presented here exhibited a forward bifurcation (Fig.~\ref{fig:bifurcation}), so lowering the basic reproduction number below 1 is sufficient to eliminate H5N6 from the poultry population.

In applying the optimal-control approach in the isolation strategy, we showed that isolation alone cannot prevent the spread of infection. Instead, it needs to be coupled with treatment so that the isolated birds can recover and heal from the infection. Figs.~\ref{fig:opt_vac_with_without}--\ref{fig:opt_vac_variedB} depict the importance of vaccine efficacy for a vaccination strategy to succeed in hindering the spread of H5N6. 

Depopulating the whole poultry population or mass culling during an outbreak is unacceptable for ethical, ecological and economic reasons \cite{butler_vaccination_2005}. However, various culling strategies have been considered by several studies, where they obtained that a threshold policy for culling can prevent overkilling of birds \cite{chong_avian-only_2016,gulbudak_forward_2013,gulbudak2014coexistence}. In this work, we examine the modified culling strategy, which includes depopulation of not only infected birds but also susceptible birds that are at high risk of infection. The depopulation of both susceptible and infected birds is an effective strategy to put an end to the spreading of avian influenza (as shown in Figs.~\ref{fig:opt_cul_with_without}-\ref{fig:opt_cul_u67}). Through application of optimal control to the culling strategy, we suggest culling at least three times during the first 10 days of the outbreak. 

Computing the basic reproduction number can contribute to decision-making in order to identify which parameters will help in inhibiting the transmission of H5N6.  By applying the optimal-control approach to different intervention strategies against H5N6, we have shown that culling of both infected and susceptible birds that are at high risk of infection is a better control strategy in prohibiting an outbreak and avoiding further recurrence of the infection from the population than confinement and vaccination. Every intervention strategy against H5N6 has advantages and disadvantages, but proper execution and appropriate application is a significant factor in achieving a desirable outcome.

\section*{Acknowledgment}
Lucido acknowledges the support of the Department of Science and Technology-Science Education Institute (DOST-SEI), Philippines for the ASTHRDP Scholarship grant. Lao holds research fellowship from De La Salle University. RS?\ is supported by an NSRC Discovery Grant. For citation purposes, please note that the question mark in ``Smith?" is part of his name.

\bibliographystyle{elsarticle-num}
\bibliography{sample}

\newpage

{\appendix

\section{Variables and parameters}
\label{tab:list}
Here, we describe each variable and parameter that we used in the AIV model, isolation model, vaccination model, and culling model.
\begin{table}[ht]
{\begin{tabular}{@{}cl@{}} \hline
\textbf{Notation} & \textbf{Description or Label} \\ 
\hline
$S(t)$ & Susceptible birds  \\
$I(t)$ & Infected birds  \\
$T(t)$ & Isolated birds \\
$R(t)$ & Recovered birds  \\
$V(t)$ & Vaccinated birds \\
$N(t)$ & Total bird population  \\
$\Lambda$ & Constant birth rate of birds \\
$\mu$ & Natural death rate of birds \\
$\beta$ & Rate at which birds contract avian influenza \\
$H$ & Half-saturation constant for birds  \\
$\delta$ & Additional disease death rate due to avian influenza\\
$p$ & Prevalence rate of the vaccination program  \\
$\phi$ & Efficacy of the vaccine \\
$\omega$ & Waning rate of the vaccine \\
$\psi$ & Isolation rate of identified infected birds  \\
$\gamma$ & Releasing rate of birds from isolation \\
$f$ & Proportion of recovered birds from isolation  \\
$c_s$ & Culling frequency for susceptible birds \\
$c_i$ & Culling frequency for infected birds \\
$\tau_s(I)$ & Culling rate of susceptible birds \\
$\tau_i(I)$ & Culling rate of infected birds \\ \hline
\end{tabular}}
\end{table}

The initial conditions are based on Philippine Influenza A (H5N6) outbreak report given by the OIE \cite{noauthor_oie_2018} together with the assumed parameter values.
\begin{table}[ht]
{\begin{tabular}{@{}lclc@{}} \hline
 \textbf{Definition} & { {\textbf{Symbol}}}  &   { {\textbf{Value}}}  &  { {\textbf{Source}}} \\ \hline
{ {Constant birth rate of birds}} & $\Lambda$ &  $\frac{2 \, 060}{365}$ { {per day}} &  { {\cite{chong_avian-only_2016}}}\\
{ {Natural mortality rate}} & $\mu$ & ${ {3.4246 \times 10^{-4}}}$ { {per day}} & { {\cite{liu_nonlinear_2017}}} \\
{ {Transmissibility of the disease}} & $\beta$ & ${ {0.025}}$ { {per day}} & { {Assumed}} \\
{ {Half-saturation constant for birds}} & $H$ &  {180 000 birds} & { {\cite{lee_transmission_2018}}} \\
{ {Disease induced death rate of poultry}} & $\delta$ & ${ {4 \times 10^{-4}}}$ { {per day}}& { {\cite{liu_nonlinear_2017}}} \\
{ {Prevalence rate of vaccination program}} & $p$ & ${ {0.50}}$ & { {Assumed}} \\
{ {Vaccine efficacy}} &$\phi$ & ${ {0.90}}$ &  { {Assumed}} \\
{ {Waning rate of the vaccine}} &$\omega$ & ${ {0.00001}}$ per day&  { {Assumed}} \\
{ {Isolation rate of identified infected birds}} &  $\psi$ & ${ {0.01}}$ { {per day}}& { {Assumed}}  \\
{ {Releasing rate of birds from isolation}} & $\gamma$ & ${ {0.09}}$ { {per day}}& { {Assumed}} \\
{ {Proportion of fully-recovered birds from isolation}} & $f$ & ${ {0.50}}$  & { {Assumed}} \\
{ {Culling frequency for susceptible birds}} & $c_s$ & $ \frac{1}{60} $ { {per day}} & { {Assumed}} \\
{ {Culling frequency for infected birds}} & $c_i$ & $  \frac{1}{7}$ {per day}&{Assumed}\\ \hline 
\end{tabular}} 
\end{table}}

\section{Non-existence of backward bifurcation}
\subsection{Vaccination}
\label{sec:vac_back}

In showing that backward bifurcation does not exist for the vaccination model, we have $I^{***}= \dfrac{-b \pm \sqrt{b^2 -4ac}}{2a}$

where 

\begin{equation}
\label{eq:vac abc 1}
\begin{aligned}
     a & = - (\mu + \delta) [\mu \beta(1 - \phi) + (\mu + \omega) (\mu + \beta)+ \beta^2 (1-\phi)], \\
    b &  = \beta^2 \Lambda (1-\phi) + \mu H (\mu + \delta) (\mu + \omega) (\mathcal{R}_V -1) \\
    & \qquad - (\mu + \delta) H [\mu \beta (1-\phi) + (\mu + \beta) (\mu + \omega)], \\
    c &  = \mu H^2 (\mu + \delta) (\mu + \omega) (\mathcal{R}_V -1). 
\end{aligned}
\end{equation}

\begin{theorem}
\label{thm:vac endemic eq}
The vaccination model (\ref{eq:vac model}) has no endemic equilibrium when $\mathcal{R}_V \leq 1$, and has a unique endemic equilibrium when $\mathcal{R}_V > 1$.
\end{theorem}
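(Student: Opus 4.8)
The plan is to treat the endemic $I$-coordinate $I^{***}$ as a root of the quadratic $q(I) = aI^2 + bI + c$ with coefficients $a,b,c$ as in (\ref{eq:vac abc 1}), and to count its positive roots, since an endemic equilibrium is feasible exactly when $I^{***} > 0$ (the formulas for $S^{***}$ and $V^{***}$ are manifestly positive once $I^{***} > 0$, their numerators and denominators being sums of positive quantities). The whole argument then reduces to a sign analysis of $a$, $b$, $c$ combined with Vieta's formulas for the product $c/a$ and the sum $-b/a$ of the roots.

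First I would pin down the leading coefficient. Since $\mu,\delta,\beta,\omega>0$ and $0\le\phi\le 1$, each bracketed summand in $a$ is positive, so $a<0$ unconditionally. Next, because $c = \mu H^2(\mu+\delta)(\mu+\omega)(\mathcal{R}_V-1)$, the sign of $c$ equals the sign of $\mathcal{R}_V-1$. The case $\mathcal{R}_V>1$ is then immediate: $c>0$ while $a<0$ gives $ac<0$, so the discriminant $b^2-4ac>0$ (two distinct real roots) and the product of roots $c/a<0$ (opposite signs). Exactly one root is positive, giving a unique endemic equilibrium.

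The substantive case is $\mathcal{R}_V\le 1$, where $c\le 0$, so $c/a\ge 0$ and Vieta alone does not exclude two positive roots --- this is precisely where a backward bifurcation could hide. I would rule it out by proving $b\le 0$ whenever $\mathcal{R}_V\le 1$; once this is known, $a<0$, $b\le 0$, $c\le 0$ force $q(0)=c\le 0$ and $q'(I)=2aI+b<0$ for every $I>0$, so $q$ is strictly decreasing and nonpositive on $(0,\infty)$ and has no positive root. The boundary value $\mathcal{R}_V=1$ is covered: then $c=0$ produces the root $I=0$ (the DFE) together with $-b/a\le 0$, so again no positive equilibrium, which is exactly the signature of a forward bifurcation at threshold.

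Establishing $b\le 0$ under $\mathcal{R}_V\le 1$ is the main obstacle. My route is to eliminate the explicit $\mathcal{R}_V$ from $b$ via the identity $\mu H(\mu+\delta)(\mu+\omega)(\mathcal{R}_V-1)=\Lambda\beta(\mu+\omega-p\mu\phi)-\mu H(\mu+\omega)(\mu+\delta)$, which comes straight from the definition of $\mathcal{R}_V$, and then to invoke $\mathcal{R}_V\le 1$ in the form $\Lambda\beta\le \mu H(\mu+\omega)(\mu+\delta)/(\mu+\omega-p\mu\phi)$ to bound the one surviving $\Lambda\beta$ term. After clearing common positive factors and rearranging, the claim collapses to the elementary inequality $(\mu+\beta)(\mu+\omega)(\mu+\omega-p\mu\phi)\ge \mu^2\beta(1-\phi)p\phi$. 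I expect to close this by the coarse bounds $\mu+\omega\ge\mu$ and $\mu+\omega-p\mu\phi\ge\mu(1-p\phi)$, which reduce it to $(\mu+\beta)(1-p\phi)\ge\beta(1-\phi)p\phi$ and finally to the scalar estimate $p\phi(2-\phi)\le 1$ on $[0,1]^2$ (whose maximum $1$ is attained only at $p=\phi=1$). This last inequality is where the constraints $0\le p,\phi\le 1$ are actually consumed.
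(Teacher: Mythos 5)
Your proposal is correct, and at the top level it follows the same strategy as the paper: treat $I^{***}$ as a root of the quadratic $q(I)=aI^2+bI+c$ with coefficients (\ref{eq:vac abc 1}), observe $a<0$ unconditionally and $\operatorname{sign}(c)=\operatorname{sign}(\mathcal{R}_V-1)$, dispose of $\mathcal{R}_V>1$ by Vieta ($c/a<0$ gives exactly one positive root), and reduce everything to showing that $b>0$ is impossible once $c\le 0$. Where you genuinely diverge is in how that last implication is established, and your route is the stronger one. The paper argues by contradiction: from $c<0$ and $b>0$ it derives the necessary inequality (\ref{eq:vac bc2}), but then refutes it only at the endpoint values $\phi=0$ and $\phi=1$ (its Cases 3A and 3B), and asserts without justification that this covers ``all values of $\phi$ as it ranges from $0$ to $1$''; since both sides of (\ref{eq:vac bc2}) vary continuously with $\phi$ (and with $p$, which the paper never varies at all), checking the two endpoints does not close the argument. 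Your proof fills exactly this gap: after eliminating $\mathcal{R}_V$ from $b$ and using $\Lambda\beta\le \mu H(\mu+\delta)(\mu+\omega)/\Theta$, the claim $b\le 0$ reduces to
\begin{equation*}
(\mu+\beta)(\mu+\omega)\left(\mu+\omega-p\mu\phi\right)\;\ge\;\mu^2\beta(1-\phi)p\phi ,
\end{equation*}
which your coarse bounds $\mu+\omega\ge\mu$, $\mu+\omega-p\mu\phi\ge\mu(1-p\phi)$ and the scalar estimate $p\phi(2-\phi)\le 1$ verify on the whole square $0\le p,\phi\le 1$, not just at its corners. (I checked the algebra of this reduction against (\ref{eq:vac abc 1}); it is exact.) Two further small advantages of your version: the monotonicity argument ($a<0$, $b\le 0$, $c\le 0$ imply $q<0$ on $(0,\infty)$) handles the threshold case $\mathcal{R}_V=1$ cleanly and consistently with the statement of the theorem, whereas the paper's Case 2A ($b>0$, $c=0$) would actually produce an endemic equilibrium at $\mathcal{R}_V=1$, contradicting the theorem --- that case is vacuous precisely by your lemma $c\le 0\Rightarrow b\le 0$, but the paper never shows this for $c=0$; and you record explicitly that positivity of $S^{***}$ and $V^{***}$ follows from $I^{***}>0$, a feasibility point the paper leaves implicit.
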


\begin{proof}
We obtain two possible endemic equilibria $E_{V_1}^{*}$ and $E_{V_2}^{*}$for the vaccination model. From  (\ref{eq:vac abc 1}), we establish the relationship between $\mathcal{R}_V$ and $c$ such that

\begin{align*}
\mathcal{R}_V > 1 \,\, \Leftrightarrow \,\, c > 0, \qquad \mathcal{R}_V = 1 \,\, \Leftrightarrow \, \, c = 0, \qquad \mathcal{R}_V < 1 \, \Leftrightarrow \,\, c < 0
\end{align*}

From (\ref{eq:vac abc 1}), it is clear that $a<0$. Now, we consider the following case when $c>0$, when $b>0$ and $c = 0$ or $(b^2 - 4ac) = 0$, and when $c<0$, $b>0$, and $(b^2 - 4ac) > 0$. \\

Case 1: $c>0$\\
When $c>0$, we have $\mathcal{R}_V>1$. Since $a<0$, it follows that

\begin{align*}
    & I_{1}^{***} = \dfrac{- b + \sqrt{b^2 - 4ac}}{2a} < 0
    & I_{2}^{***} = \dfrac{- b - \sqrt{b^2 - 4ac}}{2a} > 0
\end{align*}

and when $\mathcal{R}_V > 1$ the infected population ($I_{1}^{***}$) of the endemic equilibrium ($E_{V_1}^{*}$) does not exist and we have a unique endemic equilibrium $E_{V_2}^{*}$.\\

Case 2: $b>0$ and either $c = 0$ or $b^2 - 4ac=0$ \\
Given that $b>0$, we consider the case when $c = 0$ and when $b^2 - 4ac = 0$.\\

Case 2A: $c = 0$ \\
Since $c=0$ then $I_{1}^{***} = 0$ and $I_{2}^{***} > 0$. Note that $I_{1} = 0$ leads to the disease-free equilibrium. Hence, if $b>0$ and $c = 0$ then $I_{2}^{***} > 0$ and we have a unique endemic equilibrium $E_{V_2}^{*}$. \\

Case 2B: $b^2 - 4ac = 0$\\
Considering that $b^2 - 4ac = 0$, it follows that $I_{1}^{***} = I_{2}^{***}$ and $I_{1}^{***}, \, I_2^{***}>0$. Thus, if $b>0$ and $b^2 - 4ac = 0$, then we have a unique endemic equilibrium $E_{V_1}^{***} = E_{V_2}^{***}$. \\

Case 3: $c<0$, $b>0$, and $b^2 - 4ac >0$\\
From the assumption that $a<0$ and $c<0$, it follows that 
\begin{align*}
    & I_{1}^{***} = \dfrac{- b + \sqrt{b^2 - 4ac}}{2a} > 0
    & I_{2}^{***} = \dfrac{- b - \sqrt{b^2 - 4ac}}{2a} > 0.
\end{align*}
Thus, we have two endemic equilibria $I_{1}^{***}$ and $I_{2}^{***}$ which implies that backward bifurcation may possibly occur whenever $c<0$, $b>0$, and $b^2-4ac>0$.

However, given the values of $b$ and $c$, we can show that when $c<0$ we cannot obtain $b>0$ which we prove by contradiction. Suppose that $c<0$ and by definition of $p$ and $\phi$, the value of both parameters ranges from $0$ to $1$, that is $0 \leq p \leq 1$ and $0 \leq \phi \leq 1$. So, from (\ref{eq:vac abc 1}) it follows that $\Lambda \beta  < \dfrac{\mu H \Delta}{\Theta}$ where we define $\Theta = (\mu + \omega - p \mu \phi)$ and $\Delta = (\mu + \delta)(\mu + \omega)$. 

Using (\ref{eq:vac abc 1}) with $b>0$ we get $\Lambda \beta^2 (1-\phi) + \Lambda \beta \Theta  >  2 \mu H \Delta + \beta H \Delta +  \mu H \beta (\mu+\delta) (1-\phi)$. By simplifying, we obtain 

\begin{align}
\dfrac{\mu \beta (\mu + \omega) (1-\phi)}{\Theta}  & >  \mu (\mu + \omega) + \beta  (\mu + \omega) +  \mu  \beta  (1-\phi) \label{eq:vac bc2}.
\end{align}

As mentioned above $0 \leq \phi \leq 1$, so we consider the minimum and maximum value of $\phi$ into the inequality in (\ref{eq:vac bc2}).\\
Case 3A: Let $\phi = 0$.\\
Assuming that $\phi = 0$ so that $\Theta = (\mu + \omega_b)$ and we simplify (\ref{eq:vac bc2}) as follows:

\begin{align*}
     0  & >  \mu (\mu + \omega) + \beta  (\mu + \omega). 
\end{align*}

Since all the parameter $\mu$, $\omega$, $\beta \geq 0$, it implies that $ 0 \leq \mu (\mu + \omega) + \beta  (\mu + \omega)$. Thus, we have a contradiction. Hence, for $\phi=0$ and when $c<0$ it follows that $b \ngtr0$.\\

Case 3B: Let $\phi =1$.\\
When $\phi=1$, we can simplify (\ref{eq:vac bc2}) into

\begin{equation*}
    0 >  \mu H \Delta + \beta H \Delta. 
\end{equation*}

Similarly, given that the parameter $\mu$, $H$, $\omega$, $\delta$, and $\beta \geq 0$, it signifies that we have a contradiction. Thus, when $\phi =1$ and $c<0$ then $b \ngtr 0$.\\

From Case 3A and Case 3B, we have shown that for all values of $\phi$ as it ranges from $0$ to $1$, $b \ngtr 0$ whenever $c<0$. Results above suggest that two endemic equilibria does not exist when $\mathcal{R}_V < 1$, since the condition $c<0$, $b>0$, and $b^2 - 4ac>0$, cannot be satisfied. From Cases 1 to 3, it is evident that the vaccination model has no endemic equilibrium when $\mathcal{R}_V < 1$ and a unique endemic equilibrium when $\mathcal{R}_V \geq 1$.
\end{proof}

\subsection{Culling}
To show that the backward bifurcation does not exist we solve for $I^{****}= \dfrac{-b \pm \sqrt{b^2 -4ac}}{2a}$ such that

\begin{equation}
\label{eq:cul abc1}
\begin{aligned}
    & a = - (\mu + \delta + c_i) (\mu + c_s + \beta), \\
    & b = \mu H (\mu + \delta)(\mathcal{R}_C - 1) - c_i \mu H - H (\mu + \delta) (\mu + c_s + \beta), \\
    & c = \mu H^2 (\mu + \delta)(\mathcal{R}_C - 1). 
\end{aligned}
\end{equation}

\begin{theorem}
\label{thm:cul endemic eq}
The culling model (\ref{eq:cul model}) has no endemic equilibrium when ${\mathcal{R}_C < 1}$, and has a unique endemic equilibrium when $\mathcal{R}_C > 1 $.
\end{theorem}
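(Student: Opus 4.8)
The plan is to mirror the analysis used for the vaccination model in Theorem~\ref{thm:vac endemic eq}, reducing everything to the sign structure of the quadratic $a (I^{****})^2 + b I^{****} + c = 0$ whose coefficients are given in (\ref{eq:cul abc1}). First I would record, directly from the expression $c = \mu H^2(\mu+\delta)(\mathcal{R}_C-1)$, the equivalences $\mathcal{R}_C > 1 \Leftrightarrow c > 0$, $\mathcal{R}_C = 1 \Leftrightarrow c = 0$, and $\mathcal{R}_C < 1 \Leftrightarrow c < 0$. I would also note that $a = -(\mu+\delta+c_i)(\mu+c_s+\beta) < 0$ unconditionally, since every factor is a sum of nonnegative parameters. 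These two facts drive the whole argument through the signs of the product ($c/a$) and the sum ($-b/a$) of the roots.

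Second, I would dispose of the case $\mathcal{R}_C > 1$. Here $c > 0$ and $a < 0$, so the product of the roots $c/a$ is negative and the discriminant $b^2 - 4ac$ exceeds $b^2 \geq 0$; hence the two roots are real with opposite signs, leaving exactly one positive value of $I^{****}$ and therefore a unique biologically feasible endemic equilibrium $E_C^{*}$. The borderline $\mathcal{R}_C = 1$ gives $c = 0$, so the quadratic factors as $I^{****}(a I^{****} + b) = 0$ with roots $0$ (the DFE) and $-b/a$; since at $c=0$ the coefficient $b = -c_i \mu H - H(\mu+\delta)(\mu+c_s+\beta)$ is manifestly negative, the nonzero root $-b/a$ is negative and no endemic equilibrium exists.

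The crux, and the only place requiring care, is the case $\mathcal{R}_C < 1$, where $c < 0$. Now $c/a > 0$, so the roots share a sign, and a backward bifurcation could in principle occur if both were positive, which would require $b > 0$ together with $b^2 - 4ac > 0$. The decisive observation is that when $\mathcal{R}_C < 1$ each of the three summands of $b$ in (\ref{eq:cul abc1}) is negative: the term $\mu H(\mu+\delta)(\mathcal{R}_C-1)$ because $\mathcal{R}_C - 1 < 0$, and the terms $-c_i\mu H$ and $-H(\mu+\delta)(\mu+c_s+\beta)$ because all parameters are positive. Hence $b < 0$, so the sum of roots $-b/a < 0$; combined with the positive product this forces both roots to be negative, and no feasible endemic equilibrium exists.

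I expect this last step to be markedly simpler than its vaccination counterpart, where ruling out $b > 0$ demanded a separate contradiction argument over the full range of $\phi$; here the negativity of $b$ is immediate once $\mathcal{R}_C < 1$, since no cancellation among its terms is possible. Assembling the three cases then yields exactly the claimed dichotomy and, in particular, shows that no backward bifurcation can occur for the culling model.
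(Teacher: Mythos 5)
Your proposal is correct and follows essentially the same route as the paper's own proof: a three-way case split on $\mathcal{R}_C$ driven by the unconditional sign $a<0$ and the signs of $b$ and $c$ from (\ref{eq:cul abc1}), with the key observation that $\mathcal{R}_C<1$ forces every summand of $b$ to be negative, ruling out the two-positive-roots scenario. Your phrasing via Vieta's sum and product of roots is only a cosmetic variant of the paper's direct sign analysis of $\frac{-b\pm\sqrt{b^2-4ac}}{2a}$, so there is nothing substantive to distinguish the two arguments.
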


\begin{proof}
We begin with applying the quadratic formula to obtain

\begin{equation*}
\begin{aligned}
    & I_{1}^{****} = \dfrac{- b + \sqrt{b^2 - 4ac}}{2a}, \qquad
    & I_{2}^{****} = \dfrac{- b - \sqrt{b^2 - 4ac}}{2a}.
\end{aligned}    
\end{equation*}

From (\ref{eq:cul abc1}),  $a < 0$ and we consider cases where $\mathcal{R}_C < 1$, $\mathcal{R}_C = 1$, and $\mathcal{R}_C > 1$.\\

Case 1: $\mathcal{R}_C<1$ \\
When $\mathcal{R}_C$ is below unity, it follows that $c<0$ and $b<0$.\\

Given that $a<0$ and $c<0$, we can say that $4ac > 0$ and we get the following: 

\begin{align*}
    & I_{1}^{****} = \dfrac{- b + \sqrt{b^2 - 4ac}}{2a} < 0,
    & I_{2}^{****} = \dfrac{- b - \sqrt{b^2 - 4ac}}{2a} < 0.
\end{align*}

Thus, in our case when $\mathcal{R}_C < 1$, we have no endemic equilibrium.\\

Case 2: $\mathcal{R}_C = 1$\\
When $\mathcal{R}_C =1$, it results to $c = 0$ and $b < 0$. Assuming that $c = 0$, then we obtain $4ac = 0$ and it follows that $\sqrt{b^2 -4ac} = b$. Since $a< 0$, we realize that

\begin{align*}
    & I_{1}^{****} = \dfrac{- b + b}{2a} = 0,
    & I_{2}^{****} = \dfrac{- b - b}{2a} < 0.
\end{align*}

Hence, when $\mathcal{R}_C = 1$, we have no endemic equilibrium.\\

Case 3: $\mathcal{R}_C > 1$\\
When $\mathcal{R}_C$ is above the unity, it follows that $c > 0$. Given that $a< 0$ and $c>0$, then we get

\begin{align*}
    & I_{1}^{****} = \dfrac{- b + \sqrt{b^2 - 4ac}}{2a} < 0
    & I_{2}^{****} = \dfrac{- b - \sqrt{b^2 - 4ac}}{2a} > 0.
\end{align*}

Hence, when $\mathcal{R}_C > 1$ we have $I_{2}^{****} > 0$ and a unique endemic equilibrium $E_{C_2}^{****}$.
\end{proof}

\end{document}